\algnewcommand\algorithmicinput{\textbf{Input:}}
\algnewcommand\algorithmicoutput{\textbf{Output:}}
\algnewcommand\Input{\item[\algorithmicinput]}
\algnewcommand\Output{\item[\algorithmicoutput]}
\newtheorem{theorem}{Theorem}[section]
\newtheorem{observation}[theorem]{Observation}
\newtheorem{lemma}[theorem]{Lemma}
\newtheorem{corollary}[theorem]{Corollary}
\newtheorem{definition}[theorem]{Definition}
\newtheorem{proposition}[theorem]{Proposition}
\newcommand{\pvg}{terrain visibility graph}
\newcommand{\Pvg}{Terrain visibility graph}
\newcommand{\pg}{persistent graph}
\newcommand{\xprop}{X-property}
\newcommand{\barprop}{bar-property}
\DeclareMathOperator{\skel}{skel}
\newcommand{\id}{\mathrm{id}}
\newcommand{\triangs}[1]{\mathcal{T}_{#1}}
\newcommand{\persis}[1]{\mathcal{P}_{#1}}
\tikzset{
	vertex/.style={circle,draw, inner sep = 1.5pt},
	edge/.style={, color=black},
	curveedge/.style={color=black,out=90,in=90},
	hyperplane/.style={line width=2pt},
}
\title{Persistent Graphs and Cyclic Polytope Triangulations}
\author[1]{Vincent Froese}
\author[1]{Malte Renken\footnote{Supported by the DFG project NI~369/17-1.}}
\affil[1]{\small
  Technische Universit\"at Berlin, Faculty~IV, Institute of Software Engineering and Theoretical Computer Science, Algorithmics and Computational Complexity, Berlin, Germany,\protect\\
  \{vincent.froese, m.renken\}@tu-berlin.de}
\begin{document}

\maketitle

\begin{abstract}
  We prove a bijection between the triangulations of the 3-dimensional cyclic polytope~$C(n+2,3)$ and persistent graphs with~$n$ vertices.
  We show that under this bijection the Stasheff-Tamari orders on triangulations naturally translate to subgraph inclusion between persistent graphs. Moreover, we describe a connection to the second higher Bruhat order~$B(n,2)$.
  We additionally give an algorithm to efficiently enumerate all persistent graphs on~$n$ vertices and thus all triangulations of~$C(n+2,3)$.
  \medskip
  
\noindent\textbf{Keywords:} terrain visibility graphs, time series visibility, Bruhat order, Stasheff-Tamari order, counting and enumeration
\end{abstract}

\section{Introduction}

In this work we prove a one-to-one correspondence between the triangulations of 3\nobreakdash-dimen\-sional cyclic polytopes and persistent graphs.
Cyclic polytopes are natural generalizations of convex polygons to higher dimensions
and are among the most studied classes of polytopes.
They are neighborly and achieve the maximum number of faces according to the upper bound theorem of \textcite{mcmullen_1970}.
Triangulations of the $d$-dimensional cyclic polytope~$C(n,d)$ are well-studied~\cite{RV12}.
It is known that a triangulation is fully determined by the set of its $\lfloor d/2 \rfloor$-dimensional faces~\cite{Dey93}.
For even dimension~$d$, a combinatorial description of this set is known~\cite{OT12}.
For odd dimension, however, no characterization is known so far.

We give a characterization for~$d=3$ by proving a bijection between the set of triangulations of~$C(n,3)$ and the class of so-called \emph{\pg{}s}.
These (vertex-ordered) graphs are defined by three simple combinatorial properties:
the existence of a Hamilton path,
the rule that two crossing edges force the existence of a third edge,
and the fact that any two adjacent, non-consecutive vertices have a common neighbor between them
(see \cref{def:persistent_graphs} for a precise definition).
Persistent graphs are probably best known for their long-conjectured equality to the set of \emph{\pvg{}s},
which has been recently disproved \autocite{ameer2020terrain}.
A graph $G$ is a \emph{\pvg{}} (sometimes referred to as 1.5-dimensional \pvg{}), if there exists a sequence of points $p_i \in \RR^2$ with ascending $x$-coordinates (the vertices of~$G$)
such that there is an edge between $p_i$ and $p_j$ if and only if the line segment connecting $p_i$ and $p_j$ does not pass below any other point in between.
\Pvg{}s are known to be persistent and for all sufficiently small persistent graphs the reverse implication also holds.
But recently \textcite{ameer2020terrain} constructed a \pg{} with 35 vertices and proved that it is not a \pvg{}.
However, \textcite{AbelloStaircaseI}, and later \textcite{Saeedi2015}, showed that \pg{}s are exactly the visibility graphs of \emph{pseudo-terrains},
in which straight lines are replaced by \emph{pseudo-lines} which need not be straight but must still intersect pairwise exactly once (see also \autocite{RourkePseudoVisibility}).

Our result now connects \pg{}s to a different class of geometrical objects and provides new insights into the combinatorial structure of cyclic polytope triangulations.
This implies more efficient enumeration (and also random generation) of triangulations of the 3-dimensional cyclic polytope.

\subsection{Related Work}
For a general overview on triangulations, see the monograph by \textcite{DRS10}.
The triangulations of cyclic polytopes and their poset structures have received considerable interest \cite{KV91,ER96,Rambau97,RV12,OT12}.
Also, efforts have been made to determine the number of triangulations~\cite{AS02,RV12,JK18}.
\textcite{Thomas02} gave a bijection between the triangulations of the cyclic polytope~$C(n, d)$ and so-called \emph{snug partitions} of the set~$[n-1]^d$.

\Pvg{}s are closely related to visibility graphs of so-called staircase polygons~\cite{Colley92}
as terrains can be turned into staircase polygons by adding reflex vertices between consecutive terrain vertices.
In this context, they were studied by~\textcite{AbelloStaircaseI}, who proved that they are persistent
and gave an algorithm to construct a so-called \emph{balanced tableau} from a persistent graph, from which a pseudo-terrain can be obtained.
Even though the authors claimed that conversely every persistent graph is a \pvg{}, this was recently disproved~\autocite{ameer2020terrain}.
A simplified proof of the results of~\textcite{AbelloStaircaseI} was published by \textcite{Saeedi2015},
who also showed that persistent graphs correspond to a restricted class of 3-signotopes.
Some graph-theoretic results regarding (forbidden) induced subgraphs of \pvg{}s and relation to other graph classes are known~\cite{FR19}.
Interestingly, in the context of time series data, \pvg{}s (there called time series visibility graphs) have received a lot of attention as an analytical tool~\cite{Lacasa4972} (see also references in~\cite{FR19}).
Also related classes such as \pvg{}s with uniform step length \cite{AbelloStaircaseUniform} and \emph{horizontal visibility graphs} \cite{GUTIN20112421} (a subclass of \emph{bar visibility graphs}) have been individually studied (the latter are shown to be exactly the outerplanar graphs containing a Hamilton path).

\subsection{Preliminaries}

We introduce some notation, basic definitions and preliminary results.

\paragraph{Notation.}
We define~$[n] := \{1,\ldots,n\}$ and denote the set of all size-2 subsets of~$[n]$ by~$\binom{[n]}{2}$.
The convex hull of a set~$S$ of points is denoted~$\conv(S)$.
We assume the reader to be familiar with the basics of the theory of polytopes (see e.g.~\textcite{Ziegler95}).
For a polytopal complex~$C$, we denote the set of $i$-dimensional faces of~$C$ as $F_i(C)$ and we write $f_i(C) := \abs{F_i(C)}$. The $i$-skeleton of~$C$ is defined as $\skel_i(C)=\bigcup_{j=0}^iF_j(C)$. Note that the 1-skeleton defines a graph with vertices~$F_0(C)$
and edges~$F_1(C)$.
Throughout this work, we always consider combinatorial faces and simplices, that is, we only consider the corresponding vertex sets.

\paragraph{Cyclic Polytopes.}

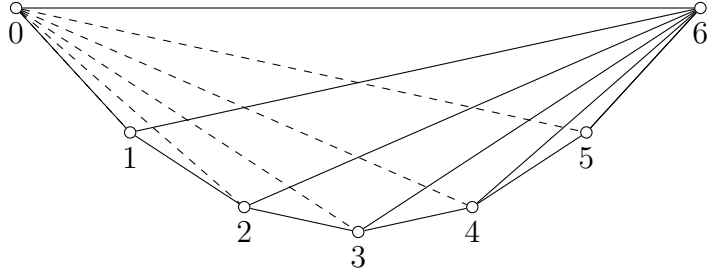
\begin{figure}[t]
  \centering
  \begin{tikzpicture}[xscale=1.5,yscale=0.33]
    \tikzset{every node/.style={vertex}}
    \node[label=below:0] (1) at (-3,9) {};
    \node[label=below:1] (2) at (-2,4) {};
    \node[label=below:2] (3) at (-1,1) {};
    \node[label=below:3] (4) at (0,0) {};
    \node[label=below:4] (5) at (1,1) {};
    \node[label=below:5] (6) at (2,4) {};
    \node[label=below:6] (7) at (3,9) {};
    \draw (1) -- (2) -- (3) -- (4) -- (5) -- (6) -- (7);
    \foreach \x in {1,2,...,6} {
      \draw (\x) -- (7);
    }
    \foreach \x in {2,3,...,6} {
      \draw[dashed] (1) -- (\x);
    }
  \end{tikzpicture}
  \caption{Schematic drawing of $C(7,3)$ on the~$x$-$y$-plane (dashed lines lying below). }
  \label{fig:example_C73}
\end{figure}

For an integer~$d \ge 1$, the $d$-dimensional cyclic polytope is defined via
 the $d$-th \emph{moment curve}:
 \[\mu_d \colon \mathbb{R} \to \mathbb{R}^d,\; t \mapsto (t,t^2,\ldots,t^d).\]
 Let $t_1 < t_2 < \ldots < t_n$ be $n > d$ real numbers. Then,
 \[C(n,d) \coloneqq \conv\{\mu_d(t_1),\ldots,\mu_d(t_n)\}\]
 is the $d$-dimensional cyclic polytope with~$n$ vertices.
 It is well-known that the combinatorics of~$C(n,d)$ do not depend on the particular values of~$t_1,\ldots,t_n$ but just on the number~$n$.
 In the remainder of this work, we consider~$C := C(n+2,3)$ and denote its vertices by $0, 1, \dots, n+1$, ordered by their first coordinate.
The faces of~$C$ are determined by \emph{Gale's evenness criterion}~\cite[Theorem~3]{Gale63} as follows (see \Cref{fig:example_C73} for an example):
\begin{align*}
  F_1(C) &= \set{\{0,n+1\}, \{0,i\},\{i,n+1\},\{i,i+1\} ; 0< i < n+1},\\
  F_2(C) &= \set{\{0,i,i+1\}, \{i-1,i,n+1\} ; 0 < i < n+1}.
\end{align*}

A \emph{triangulation} of~$C$ is a collection~$T=\{S_1,\ldots,S_m\}$ of 3-simplices (that is, tetrahedra)~$S_i=\{a,b,c,d\}\subseteq\{0,\ldots,n-1\}$, such that
$\bigcup_{i=1}^m\conv(S_i) = C$ and each pair of 3-simplices intersects in a common (possibly empty) face.
We denote the set of all triangulations of~$C = C(n+2, 3)$ by~$\triangs{n+2}$.

We proceed with some known results about characterizing triangulations of~$C$.
We will use these in order to prove our main result.
A \emph{circuit} (also called a \emph{primitive Radon partition}) is a pair~$(X,Y)$ of disjoint minimal subsets of vertices of~$C$ such that
$\conv(X) \cap \conv(Y) \neq \emptyset$.
The circuits of~$C$ are easily characterized as follows:
\begin{lemma}[\cite{Breen1973}]\label{thm:cyclic_circuits}
	The circuits of $C$ are exactly the pairs
	$(\{x_1, x_3, x_5\}, \{x_2, x_4\})$ with $x_1 < x_2 < x_3 < x_4 < x_5$.
\end{lemma}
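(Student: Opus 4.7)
The plan is to exploit the rigid structure of the moment curve. Since the vertices of $C$ are $\mu_3(t_1), \ldots, \mu_3(t_{n+2})$ on the curve $\mu_3(t) = (t, t^2, t^3)$, any four of them are affinely independent: the relevant $4 \times 4$ matrix (with a row of ones appended) is Vandermonde, hence nonsingular. By Radon's theorem applied in $\RR^3$, this means every circuit must consist of exactly $d+2 = 5$ vertices, and any five vertices admit a unique (up to scaling) affine dependence $\sum_{i=1}^5 \lambda_i \mu_3(t_{x_i}) = 0$ with $\sum_i \lambda_i = 0$.

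The core step is to determine the sign pattern of $(\lambda_1, \ldots, \lambda_5)$ for indices $x_1 < x_2 < x_3 < x_4 < x_5$. Writing the dependence as the kernel of the $4 \times 5$ matrix whose $i$-th column is $(1, t_{x_i}, t_{x_i}^2, t_{x_i}^3)^T$, Cramer's rule gives $\lambda_i = (-1)^{i+1} \det(V_i)$, where $V_i$ is the $4 \times 4$ Vandermonde matrix on $\{t_{x_j}\}_{j \ne i}$. Since all $t_{x_j}$ are positive and increasing, every $\det(V_i)$ is strictly positive, so the signs of the $\lambda_i$ alternate as $(+, -, +, -, +)$.

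Splitting the dependence according to sign and normalizing, the identity $\sum_i \lambda_i \mu_3(t_{x_i}) = 0$ together with $\sum_i \lambda_i = 0$ yields a common point in $\conv(\{x_1, x_3, x_5\}) \cap \conv(\{x_2, x_4\})$, so $(\{x_1, x_3, x_5\}, \{x_2, x_4\})$ is a Radon partition. Minimality follows because no proper subset of these five vertices admits an affine dependence (again by the Vandermonde nonsingularity of any four vertices), hence it is a circuit. Conversely, any circuit must have five elements and equal this partition by the uniqueness of the sign pattern established above.

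I do not foresee any major obstacle: once general position of the moment curve vertices is invoked, everything reduces to a sign calculation for Vandermonde determinants. The only subtle point is being careful that \emph{minimality} is automatic from general position, so the candidate partition cannot be refined to a smaller one.
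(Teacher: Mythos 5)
The paper does not prove this lemma at all---it is imported verbatim from Breen (1973)---so your self-contained argument is necessarily a ``different route,'' and it is the standard, correct one. Any four vertices of $C$ are affinely independent because the corresponding $4\times 4$ matrix built from $(1,t,t^2,t^3)^{T}$ is Vandermonde; hence the support of every circuit has exactly five elements and carries a one-dimensional space of affine dependences, and the cofactor formula $\lambda_i=(-1)^{i+1}\det(V_i)$ with all $\det(V_i)>0$ forces the alternating sign pattern $(+,-,+,-,+)$, which is precisely the partition $(\{x_1,x_3,x_5\},\{x_2,x_4\})$. Your minimality remark is also right: any proper refinement would produce an affine dependence among at most four vertices, which general position forbids. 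One small correction: you justify $\det(V_i)>0$ by saying the $t_{x_j}$ are ``positive and increasing,'' but the paper only assumes the $t_i$ are arbitrary increasing reals; fortunately positivity is irrelevant, since the Vandermonde determinant $\prod_{j<k}(t_{x_k}-t_{x_j})$ is positive whenever its arguments are increasing. (A second point you gloss over, though it is standard: to rule out circuits with six or more vertices you need that any affine dependence decomposes conformally into minimal ones, so that a larger Radon pair could be refined \emph{within} $X$ and $Y$.) What your proof buys is an elementary, self-contained verification of the cited fact; what the citation buys the authors is brevity and the general statement, since the identical Vandermonde computation describes the circuits of $C(n,d)$ in every dimension.
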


The above result on circuits allows us to give the following characterization of a triangulation of~$C$ as a direct consequence of \cite[Proposition~2.2]{Rambau97}.

\begin{proposition}\label{thm:triangulation_conditions}
	A set $T$ of 3-simplices with vertices from $C$
	is a triangulation of~$C$ if and only if
	\begin{enumerate}
		\item for each $S \in T$ and each facet $F$ of $S$ either $F \in F_2(C)$ or there is another 3-simplex $S' \in T$ of which $F$ is a facet (Union-Property), and
		\item there is no pair of 3-simplices $S, S' \in T$ such that
                  $\{x_1, x_3, x_5\} \subset S$ and $\{x_2, x_4\} \subset S'$ for any $x_1 < x_2 < x_3 < x_4 < x_5$ (Intersection-Property).
	\end{enumerate}
\end{proposition}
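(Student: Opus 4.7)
The plan is to deduce this as a direct specialization of Rambau's general triangulation criterion \cite[Proposition~2.2]{Rambau97}, which characterizes triangulations of an arbitrary convex polytope $P$ in terms of two combinatorial conditions: a \emph{general union property} requiring that every codim-1 facet of a simplex in $T$ either lies on the boundary of $P$ or is shared with another simplex of $T$, and a \emph{general intersection property} forbidding any two simplices $S, S' \in T$ to separate a circuit $(X, Y)$ of $P$, meaning $X \subseteq S$ and $Y \subseteq S'$.

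First, I would verify that the union condition specializes to the Union-Property as stated. Since $\dim C = 3$, the codim-1 facets of any 3-simplex are its four triangular faces, and a triangle with vertices in $C$ is a boundary face precisely when it lies in $F_2(C)$; Gale's evenness criterion, recalled just above, makes this set explicit. Hence Rambau's general union property reads verbatim as our Union-Property, with ``another simplex'' meaning \emph{exactly} one other simplex (the uniqueness being automatic once the intersection property is in place).

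Next, I would apply \cref{thm:cyclic_circuits} to translate the intersection condition. That lemma gives an \emph{exhaustive} parametrization of all circuits of $C$ as pairs $(\{x_1, x_3, x_5\}, \{x_2, x_4\})$ with $x_1 < x_2 < x_3 < x_4 < x_5$, and substituting this parametrization into the general intersection property yields precisely the Intersection-Property stated in the proposition. I do not anticipate a substantive obstacle: the proof is essentially a notational translation, and the only subtlety worth checking is that no circuits are silently lost when specializing to $C$, which is immediate because \cref{thm:cyclic_circuits} is an equality of sets and thus makes both directions of the equivalence transparent.
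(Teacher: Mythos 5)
Your proposal matches the paper's approach exactly: the paper gives no separate proof but states the proposition as a direct consequence of \cite[Proposition~2.2]{Rambau97} combined with the circuit characterization of \cref{thm:cyclic_circuits}, which is precisely the specialization you carry out. The translation is correct and complete as written.
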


The next observation states that an internal edge of a triangulation is contained in at least three 3-simplices.
It follows directly from the union-property of \Cref{thm:triangulation_conditions}.

\begin{observation}\label{thm:internal_edge_surrounded}
	Let $T$ be a triangulation of $C$ and let $\{v, w\}$ be an internal edge, i.e., $\{v, w\} \in F_1(T) \setminus F_1(C)$.
	Then, there are $k \geq 3$ vertices $x_1, \ldots, x_k, x_{k+1} = x_1$ such that
	$\{v, w, x_i, x_{i+1}\} \in T$ for all $i = 1, \dots, k$.
\end{observation}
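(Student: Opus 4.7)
The plan is to iterate the union-property of \Cref{thm:triangulation_conditions} starting from any 3-simplex of $T$ containing $\{v,w\}$, thereby tracing a sequence of adjacent 3-simplices around the edge that must close into a cycle of length at least three.

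The essential preparatory observation is that $\{v,w\}$ lies in no 2-face of $C$. By Gale's evenness criterion every element of $F_2(C)$ contains $0$ or $n+1$, while every edge of $C$ incident to $0$ or $n+1$ is itself in $F_1(C)$; since $\{v,w\}$ is internal, it contains neither $0$ nor $n+1$. Therefore, for every $\{v,w,a,b\} \in T$, neither of its facets $\{v,w,a\}$ and $\{v,w,b\}$ lies in $F_2(C)$, so by the union-property each such facet is shared with another 3-simplex of $T$.

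Starting from some $S_1 = \{v,w,x_1,x_2\} \in T$, I would iteratively let $S_{i+1} = \{v,w,x_{i+1},x_{i+2}\}$ denote the \emph{other} 3-simplex in $T$ sharing the facet $\{v,w,x_{i+1}\}$ with $S_i$. Since $T$ is finite, this walk eventually closes up, yielding a cyclic sequence $\{v,w,x_i,x_{i+1}\} \in T$ with $x_{k+1} = x_1$. To see $k \geq 3$, I would observe that any two distinct 3-simplices through $\{v,w\}$ can share at most one facet containing $\{v,w\}$---sharing both would force equal vertex sets---which rules out both $k=1$ (the step $S_i \to S_{i+1}$ produces distinct simplices) and $k=2$ (the closure $S_3 = S_1$ would require $S_1$ and $S_2$ to share two such facets).

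The main subtlety is the uniqueness of the ``other'' 3-simplex across an internal 2-face, which makes the iteration well-defined and guarantees that it returns precisely to $S_1$ rather than to some later $S_j$. This is the standard pseudomanifold property of triangulations of 3-polytopes---each interior triangle lies in exactly two 3-simplices---and can be established by combining the union-property with the intersection-property, or equivalently by invoking that the link of an interior edge in such a triangulation is a 1-sphere.
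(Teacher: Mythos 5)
Your proof is correct and takes the same route the paper intends: the paper gives no written proof beyond the remark that the observation ``follows directly from the union-property,'' and your argument is exactly an elaboration of that, correctly checking first that no facet $\{v,w,a\}$ of a 3-simplex containing the internal edge $\{v,w\}$ can lie in $F_2(C)$. One remark: your final appeal to the pseudomanifold property (each interior triangle lying in \emph{exactly} two 3-simplices) is not needed and is the only unproven ingredient you lean on. The union-property alone gives that every vertex of the link graph of $\{v,w\}$ (vertices $a$ with $\{v,w,a\}\in F_2(T)$, edges $\{a,b\}$ with $\{v,w,a,b\}\in T$) has degree at least $2$, so this finite simple graph contains a cycle, necessarily of length at least $3$; equivalently, in your walk it suffices to cut at the \emph{first} repeated vertex $x_j=x_i$ rather than insisting the walk return to $S_1$ (non-backtracking rules out $j-i\le 2$). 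With that small adjustment your argument is complete and self-contained.
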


This leads us to the following helpful lemma about internal edges.

\begin{lemma}\label{thm:super_bar_property}
	Let $T$ be a triangulation of~$C$ and let $\{v, w\} \in F_1(T) \setminus F_1(C)$ be an internal edge with $v < w$.
	Then, there are vertices $a, b, c$ with $a < v < b < w < c$ such that
	$\{\{v, w, a\},\{v, w, b\},\{v, w, c\}\}\subseteq F_2(T)$.
\end{lemma}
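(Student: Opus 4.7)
The plan is to argue by contradiction, showing that the link vertices $x_1,\ldots,x_k$ from Observation~\ref{thm:internal_edge_surrounded} always meet each of the three index-ranges $A=[0,v-1]$, $B=(v,w)\cap\mathbb{Z}$, $C=[w+1,n+1]$; any triple $(a,b,c)\in A\times B\times C$ chosen from the link then yields the three $2$-faces $\{v,w,a\},\{v,w,b\},\{v,w,c\}$. Since $\{v,w\}\notin F_1(C)$, Gale's evenness forces $v\ge 1$, $w\le n$, and $w\ge v+2$, so $A,B,C$ are all nonempty; and by Observation~\ref{thm:internal_edge_surrounded} the edge $\{v,w\}$ is surrounded by a cyclic family of tetrahedra $\{v,w,x_i,x_{i+1}\}$ with $i\in\mathbb{Z}/k\mathbb{Z}$ and $k\ge 3$.

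The key tool is the following opposite-sides rule: because the $2$-face $\{v,w,x_i\}$ is shared between the two consecutive link tetrahedra $\{v,w,x_{i-1},x_i\}$ and $\{v,w,x_i,x_{i+1}\}$, the vertices $x_{i-1}$ and $x_{i+1}$ lie on opposite sides of the plane through $\mu_3(v),\mu_3(w),\mu_3(x_i)$. A Vandermonde computation on the moment curve shows that $\mu_3(y)$ sits on the side with sign of $(y-v)(y-w)(y-x_i)=\epsilon_y\cdot\mathrm{sign}(y-x_i)$, where $\epsilon_A=\epsilon_C=+1$ and $\epsilon_B=-1$, so opposite sides reads
\[
\epsilon_{x_{i-1}}\epsilon_{x_{i+1}}\cdot\mathrm{sign}(x_{i-1}-x_i)\cdot\mathrm{sign}(x_{i+1}-x_i)=-1.
\]

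Now suppose toward contradiction that some type is missing; by symmetry, assume $B$ is. If only $A$ (or only $C$) appears, all $\epsilon$'s are $+1$, so the rule forces every $x_i$ to be a strict saddle of the cyclic sequence of values, contradicting the existence of a global maximum. If both $A$ and~$C$ appear, the cycle contains a consecutive pair $x_i\in A$, $x_{i+1}\in C$ (the opposite orientation is symmetric). The rule applied at the plane $\{v,w,x_{i+1}\}$, together with the exclusion of type~$B$, forces $x_{i+2}\in C$ with $x_{i+2}>x_{i+1}$; iterating, $x_{i+1}<x_{i+2}<\cdots<x_{i+k-1}$ all lie in~$C$. But then both cycle-neighbors of $x_i$ lie in $C$ and sit on the $+$-side of the plane $\{v,w,x_i\}$, contradicting opposite sides.

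The delicate step is the opposite-sides rule itself, which I would derive either directly from the definition of a triangulation (two tetrahedra sharing a common $2$-face must sit on opposite sides of its affine hull) or combinatorially by applying Proposition~\ref{thm:triangulation_conditions} to the five-vertex set $\{v,w,x_{i-1},x_i,x_{i+1}\}$. Everything else reduces to routine sign bookkeeping around the cycle.
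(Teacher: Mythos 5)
Your proof is correct in substance but takes a genuinely different route from the paper. The paper's argument is structural: it takes the link cycle $x_1,\dots,x_k$ from \cref{thm:internal_edge_surrounded}, observes that the star of $\{v,w\}$ triangulates the smaller cyclic polytope $K=\conv\{v,w,x_1,\dots,x_k\}$ with $\{v,w\}$ as an internal edge of that triangulation, and then reads off from Gale's evenness criterion (applied to $K$, not to $C$) that $v$ and $w$ can be neither extremal nor consecutive among the vertices of $K$ --- which is exactly the statement that the link meets all three ranges. Your argument replaces this with explicit orientation bookkeeping: the opposite-sides rule for adjacent link tetrahedra (which is valid, and cleanest via the geometric derivation you mention, since two full-dimensional simplices of a triangulation sharing a facet must lie on opposite sides of its affine hull), the Vandermonde sign $\operatorname{sign}((y-v)(y-w)(y-x_i))$, and a walk around the cycle. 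Both work; the paper's is shorter and exploits the self-similarity of cyclic polytopes, while yours is more self-contained and makes the local geometry explicit.

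One caveat: your reduction ``by symmetry, assume $B$ is missing'' is not backed by an actual symmetry. The three ranges are not interchangeable, precisely because $\epsilon_B=-1$ while $\epsilon_A=\epsilon_C=+1$, and $C(n+2,3)$ has no combinatorial automorphism exchanging the middle interval with an outer one (only the order-reversal swapping $A$ and $C$). The cases ``$A$ missing'' and ``$C$ missing'' therefore need their own sign computations. They do go through --- e.g.\ if $A$ is missing and both $B$ and $C$ occur, a pair $x_i\in B$, $x_{i+1}\in C$ forces $x_{i+2}\in C$ with $x_{i+2}<x_{i+1}$, the chain decreases within $C$ around the cycle, and the two neighbors of $x_i$ end up on the same side of the plane through $v,w,x_i$ --- but the iteration runs in the opposite direction and with different $\epsilon$'s, so these cases must be written out rather than waved away. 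With that repair the proof is complete.
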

\begin{proof}
	Note that~$0 < v$ and~$w < n+1$ since~$\{v,w\}$ is not an edge of~$C$.
	Let $x_1, \ldots, x_{k+1}=x_1$ be the $k\ge 3$ vertices given by \cref{thm:internal_edge_surrounded}.
	Then $K := \conv\{v, w, x_1, \dots, x_k\}$ is a cyclic polytope on $k+2$ vertices
	and $T' := \set{\set{v, w, x_i, x_{i+1}}; i=1,\dots,k}$ is a triangulation of $K$.
	Note that $\{v, w\}$ is an internal edge of $T'$ and thus not an edge of $K$.
	This implies that $v$~and~$w$ can not be the first or the last vertex or two consecutive vertices of~$K$.
	Therefore, $\{x_1, \dots, x_k\}$ contains vertices $a, b, c$ with $a < v < b < w < c$ as claimed.
\end{proof}

\paragraph{Terrain Visibility and Persistent Graphs.}
\label{def:persistent_graphs}

\Pvg{}s are visibility graphs of 1.5-dimensional \emph{terrains}, that is, $x$-monotone polygonal chains in the plane defined by a set~$V\subseteq\mathbb{R}^2$ of \emph{terrain vertices} with pairwise different~$x$-coordinates.
Two vertices~$v_1=(x_1,y_1)$ and~$v_2=(x_2,y_2)$ are adjacent if and only if they \emph{see each other}, that is, there is no vertex between them that lies on or above the line segment connecting them.
Formally, there exists an edge~$\{v_1,v_2\}$, for $x_1< x_2$, if and only if all terrain vertices~$(x,y)$ with $x_1 < x < x_2$ satisfy
\begin{align*}
y < y_1 + (x - x_1) \frac{y_2 - y_1}{x_2 - x_1}.
\end{align*}
\Cref{fig:example_tvg} depicts an example.
We denote the vertices by~$1,\ldots,n$ in increasing order of their~$x$-coordinates.

\begin{figure}
  \centering
  \begin{tikzpicture}[xscale=1.33]
    \tikzset{every node/.style={vertex}}
    \node[label=below:1] (1) at (0,1.5) {};
    \node[label=below:2] (2) at (1,0) {};
    \node[label=right:3] (3) at (2,1) {};
    \node[label=below:4] (4) at (2.5,-1) {};
    \node[label=above:5] (5) at (3.5,4) {};
    \node[label=below:6] (6) at (4.25,2) {};
    \node[label=below:7] (7) at (5,3) {};
    \draw[very thick] (1) -- (2) -- (3) -- (4) -- (5) -- (6) -- (7);
    \draw (1) -- (3);
    \draw (1) -- (5);
    \draw (2) -- (5);
    \draw (3) -- (5);
    \draw (5) -- (7);
  \end{tikzpicture}\hspace{1cm}
  \begin{tikzpicture}[baseline=-2cm]
    \tikzset{every node/.style={vertex}}
    \node[label=below:1] (1) at (1,0) {};
    \node[label=below:2] (2) at (2,0) {};
    \node[label=below:3] (3) at (3,0) {};
    \node[label=below:4] (4) at (4,0) {};
    \node[label=below:5] (5) at (5,0) {};
    \node[label=below:6] (6) at (6,0) {};
    \node[label=below:7] (7) at (7,0) {};
    \draw (1) -- (2) -- (3) -- (4) -- (5) -- (6) -- (7);
    \draw[curveedge] (1) to (3);
    \draw[curveedge] (1) to (5);
    \draw[curveedge] (2) to (5);
    \draw[curveedge] (3) to (5);
    \draw[curveedge] (5) to (7);
  \end{tikzpicture}
  \caption{A terrain visibility graph drawn in two different ways (with a corresponding terrain on the left).}
  \label{fig:example_tvg}
\end{figure}
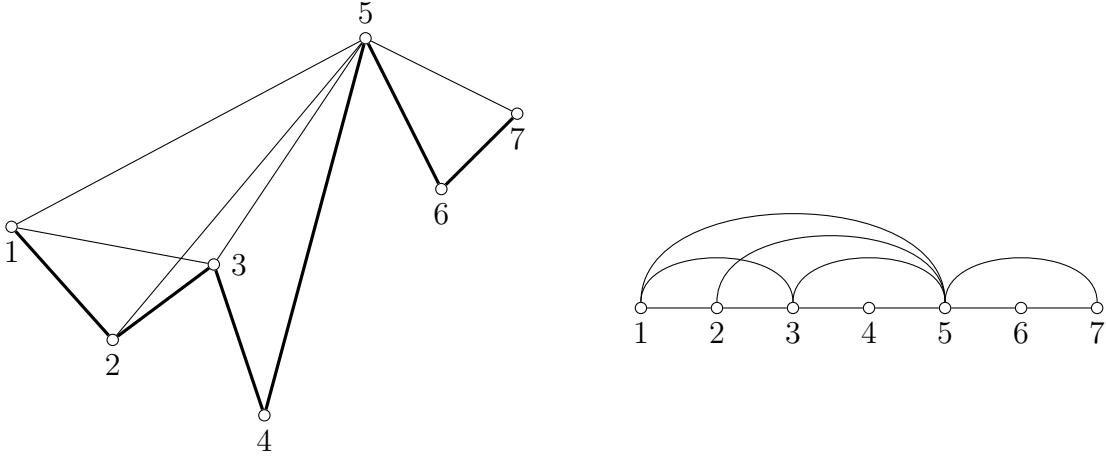

\pagebreak[2]

\Pvg{}s are known to be persistent~\cite{Saeedi2015}
where a graph $G=([n], E)$ is called \emph{persistent} if it satisfies the following three properties.
\begin{enumerate}
  \item It contains the Hamilton path~$1,\ldots,n$, that is,~$\{\{i,i+1\}\mid 1\le i < n\}\subseteq E$
  \item \textbf{\xprop{}}: If~$\{a,c\}\in E$ and~$\{b,d\}\in E$ for some vertices~$a < b < c < d$, then~$\{a,d\}\in E$.
  \item \textbf{\barprop{}}: For every edge~$\{a,b\}\in E$ with~$a < b-1$, there exists a vertex~$x$ with~$a < x < b$ such that~$\{a,x\}\in E$ and~$\{x,b\}\in E$.
\end{enumerate}
We denote the set of all persistent graphs with~$n$ vertices by~$\persis{n}$.
Note that graphs satisfying only the first two properties are called \emph{terrain-like}~\cite{FR20}.

We prove the following elementary property about \pg{}s,
which states that consecutive neighbors of a vertex are also neighbors of each other.
Here, $N(v)$ denotes the neighborhood of vertex~$v$ and~$N[v] = N(v)\cup\{v\}$.
\begin{lemma}\label{thm:consecutive_neighbors}
  Let $G=([n],E)$ be a \pg{} and let $a,b,c$ be vertices such that~$b < c$, $\{b,c\}\subseteq N(a)$, and there is no vertex $x \in N[a]$ with $b < x < c$. Then $\{b, c\} \in E$.
\end{lemma}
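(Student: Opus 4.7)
Since $a \in N[a]$, the hypothesis excludes the configuration $b < a < c$, so either $a < b < c$ or $b < c < a$. All three defining properties of a \pg{} are invariant under the order-reversing involution $i \mapsto n+1-i$ on $[n]$, which swaps these two configurations; so I may assume $a < b < c$.

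If $c = b+1$, then $\{b,c\}$ is supplied by the Hamilton path, so I assume $c \ge b+2$. Then $c - a \ge 2$, the \barprop{} applies to $\{a,c\}$, and produces some $x_0$ with $a < x_0 < c$ and $\{a,x_0\},\{x_0,c\} \in E$. Because $x_0 \in N(a)$ and no neighbor of $a$ lies strictly between $b$ and $c$, we must have $x_0 \le b$; if $x_0 = b$, the lemma is proved.

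Otherwise $x_0 < b \le c-2$, so the \barprop{} applies again, this time to $\{x_0,c\}$, producing $x_1$ with $x_0 < x_1 < c$ and $\{x_0,x_1\},\{x_1,c\} \in E$. I claim $x_1 \le b$: if instead $x_1 > b$, then $a < x_0 < b < x_1$, so the \xprop{} applied to the edges $\{a,b\}$ and $\{x_0,x_1\}$ forces $\{a,x_1\} \in E$ with $b < x_1 < c$, contradicting the hypothesis. Iterating yields a strictly increasing integer sequence $x_0 < x_1 < x_2 < \cdots$ bounded above by $b$, which must terminate at some $x_k = b$; then $\{b,c\} = \{x_k,c\} \in E$.

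The main subtlety I expect is maintaining the two invariants $x_i < b$ and $x_i \le c-2$ in tandem through the iteration: the former powers the \xprop{} step that pins $x_{i+1}$ at or below $b$, while the latter keeps the \barprop{} applicable to $\{x_i, c\}$ so that $x_{i+1}$ can even be produced. Both follow from the standing assumption $c \ge b+2$ together with the strict bound $x_i < b$ inherited from the previous step, so the recursion is self-sustaining until it halts at $x_k = b$.
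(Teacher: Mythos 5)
Your proof is correct and follows essentially the same route as the paper's: apply the \barprop{} to $\{a,c\}$, observe the resulting common neighbor must land at or below $b$, and iterate via the \barprop{} and \xprop{} until the increasing sequence of common neighbors of $c$ terminates at $b$. The extra touches (explicitly ruling out $b<a<c$ via $a\in N[a]$, and splitting off $c=b+1$) are fine but not needed beyond what the paper already does implicitly.
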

\begin{proof}
  We assume that $a < b < c$ (the case~$b < c < a$ is fully symmetric).
  By the \barprop{}, there exists a common neighbor~$x$ of~$a$ and~$c$ with~$a < x < c$.
  Note that~$b < x$ is not possible by assumption.
  If $x = b$, then we are done.
  Otherwise, we have~$a < x < b$ and the \barprop{} again implies the existence of a common neighbor~$x'$ of~$x$ and~$c$ with~$x < x' < c$.
  Now, if~$b < x'$, then the \xprop{} (applied to~$\{a,b\}$ and~$\{x,x'\}$) implies that~$x'$ is a neighbor of~$a$ which contradicts our assumption on~$b$ and~$c$.
  Thus, $a < x' \le b$. Note that we can repeat the above argument again on~$x'$ if~$x' < b$.
  Since $G$ is finite, we can conclude that $b$ is a neighbor of~$c$. 
\end{proof}

\section{A Bijection Between $\triangs{n+2}$ and $\persis{n}$}

In this section, we prove a bijection between triangulations of~$C(n+2,3)$ and persistent graphs on~$n$ vertices.
The central observation is that the 1-skeleton of a triangulation forms a persistent graph.
From this graph we will omit the outermost vertices $0$ and $n+1$ as they are always connected to all other vertices,
thus carrying no information.
See \Cref{fig:example_bijection} for an example.
Formally, the resulting map is defined as follows.

\begin{figure}
  \centering
  \begin{tikzpicture}[xscale=1.5,yscale=0.33]
    \tikzset{every node/.style={vertex}}
    \node[label=below:0] (1) at (-3,9) {};
    \node[label=below:1] (2) at (-2,4) {};
    \node[label=below:2] (3) at (-1,1) {};
    \node[label=below:3] (4) at (0,0) {};
    \node[label=below:4] (5) at (1,1) {};
    \node[label=below:5] (6) at (2,4) {};
    \node[label=below:6] (7) at (3,9) {};
    \draw (1) -- (2) -- (3) -- (4) -- (5) -- (6) -- (7);
    \foreach \x in {1,2,...,6} {
      \draw (\x) -- (7);
    }
    \foreach \x in {2,3,...,6} {
      \draw[dashed] (1) -- (\x);
    }
    \draw[line width=2pt] (7) -- (2);
    \draw[line width=2pt] (7) -- (4);
    \draw[line width=2pt] (7) -- (6);
    \draw[line width=2pt] (2) -- (6);
    \draw[line width=2pt] (2) -- (4);
    \draw[line width=2pt] (4) -- (6);
  \end{tikzpicture}
   \begin{tikzpicture}[baseline=-2cm]
    \tikzset{every node/.style={vertex}}
    \node[label=below:1] (1) at (1,0) {};
    \node[label=below:2] (2) at (2,0) {};
    \node[label=below:3] (3) at (3,0) {};
    \node[label=below:4] (4) at (4,0) {};
    \node[label=below:5] (5) at (5,0) {};
    \draw (1) -- (2) -- (3) -- (4) -- (5);
    \draw[curveedge,very thick] (1) to (3);
    \draw[curveedge,very thick] (3) to (5);
    \draw[curveedge,very thick] (1) to (5);
    
  \end{tikzpicture}
  \caption{Example of a triangulation~$T$ of~$C(7,3)$ and the corresponding persistent graph~$\Gamma(T)$ on five vertices. 
  The 3-simplices of the triangulation are $T=\{\{0,1,2,3\},\: \{1,2,3,6\},\: \{0,3,4,5\},\: \{3,4,5,6\},\: \{0,1,5,6\},\: \{0,1,3,5\},\:\{1,3,5,6\}\}$. 
  The \mbox{3-simplex}~$\{1,3,5,6\}$ (thick lines) yields the edges~$\{1,3\}$, $\{3,5\}$, and $\{1,5\}$ in~$\Gamma(T)$.
  Conversely, this 3-simplex is obtained from the edge~$\{3,5\}$ according to the inverse map~$\Xi$.}
  \label{fig:example_bijection}
\end{figure}
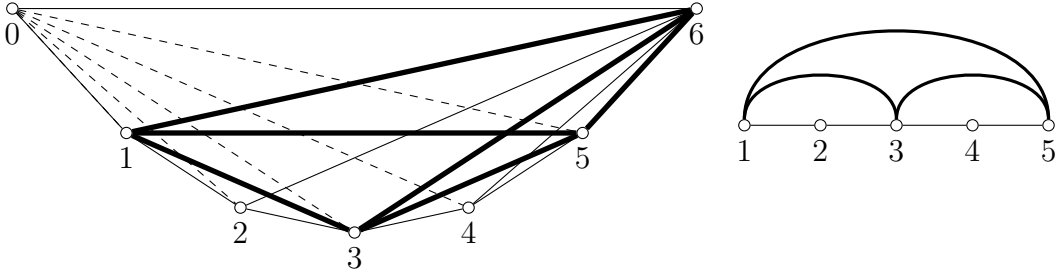

\begin{definition}
  For~$n\ge 2$, the map $\Gamma \colon \triangs{n+2} \rightarrow \persis{n}$ is defined as \[\Gamma(T) := \left([n],F_1(T)\cap [n]^2\right),\]
  that is, two vertices~$i$ and~$j$ are adjacent in~$\Gamma(T)$ if and only if~$\{i,j\}\subseteq S$ for some 3-simplex~$S\in T$.
\end{definition}

First, we show that~$\Gamma$ is well-defined, that is, $\Gamma(T)$ is in fact a persistent graph.

\begin{lemma}
	For every~$T\in\triangs{n+2}$, it holds $\Gamma(T)\in \persis{n}$.
\end{lemma}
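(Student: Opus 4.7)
The plan is to verify the three defining properties of a persistent graph for $\Gamma(T)$. The Hamilton path is immediate: each edge $\{i,i+1\}$ with $1\le i<n$ lies in $F_1(C)$ by the description of $F_1(C)$ recalled above, hence in every triangulation, hence in $\Gamma(T)$. The bar-property follows directly from \Cref{thm:super_bar_property}: given $\{a,b\}\in F_1(\Gamma(T))$ with $a<b-1$, the edge $\{a,b\}$ does not lie in $F_1(C)$ (since $a,b\in[n]$ are not consecutive and neither equals $0$ or $n+1$), so it is an internal edge of $T$, and \Cref{thm:super_bar_property} furnishes a vertex $x$ with $a<x<b$ and $\{a,b,x\}\in F_2(T)$; both $\{a,x\}$ and $\{x,b\}$ then lie in $F_1(\Gamma(T))$ since $x\in[n]$.

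The main work is the X-property. Suppose $\{a,c\},\{b,d\}\in F_1(\Gamma(T))$ with $a<b<c<d$; I claim $\{a,d\}\in F_1(\Gamma(T))$, and proceed by induction on $d-c$. Since $c-a\ge 2$, the edge $\{a,c\}$ is internal, so \Cref{thm:super_bar_property} produces a vertex $c'>c$ with $\{a,c,c'\}\in F_2(T)$. I pick any 3-simplex $\sigma\in T$ containing this 2-face and any 3-simplex $S_2\in T$ containing $\{b,d\}$. The crucial step is showing $c'\le d$: otherwise $a<b<c<d<c'$ would place $\{a,c,c'\}\subseteq\sigma$ at the odd positions and $\{b,d\}\subseteq S_2$ at the even positions of a five-element sequence, violating the intersection-property of \Cref{thm:triangulation_conditions}. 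If $c'=d$, then $\{a,d\}$ is an edge of the 2-face $\{a,c,d\}$, and we are done. If $c<c'<d$, then $c'\in[n]$ (as $c'\le d\le n$) and $\{a,c'\}\in F_1(\Gamma(T))$, so the induction hypothesis applied to the crossing pair $\{a,c'\},\{b,d\}$ with strictly smaller gap $d-c'<d-c$ yields $\{a,d\}\in F_1(\Gamma(T))$. The base case $d-c=1$ is covered by the general argument, since there $c<c'\le d=c+1$ forces $c'=d$.

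The X-property is the main obstacle. The key insight is the interplay between \Cref{thm:super_bar_property}, which extends an internal edge $\{a,c\}$ upward via some 2-face $\{a,c,c'\}$ with $c'>c$, and the intersection-property of \Cref{thm:triangulation_conditions}, which forces the new vertex $c'$ to lie at most at $d$; together they drive the induction on $d-c$ to termination at the desired edge $\{a,d\}$.
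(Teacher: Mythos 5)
Your proof is correct and takes essentially the same route as the paper's: the Hamilton path and \barprop{} are handled identically, and the \xprop{} is derived from the same two ingredients, namely \cref{thm:super_bar_property} combined with the intersection-property of \cref{thm:triangulation_conditions}. The only cosmetic difference is that you extend the left edge $\{a,c\}$ rightward and induct on $d-c$, whereas the paper extends the right edge leftward and argues via a lexicographically minimal counterexample.
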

\begin{proof}
  Clearly,~$\Gamma(T)$ contains the Hamilton path~$1,\dots,n$ since~$\{i,i+1\}\in F_1(C)$ and thus~$\{i,i+1\}\in F_1(T)$ for each~$i\in[n-1]$.
  
	Next, we show that $\Gamma(T)$ satisfies the \barprop.
	Let $e = \{v, w\}$ be an edge of~$\Gamma(T)$ with $v < w -1$.
	Then, $e$ is an internal edge, that is, $e \in F_1(T) \setminus F_1(C)$.
	Hence, by \cref{thm:super_bar_property}, there exists a vertex~$b$ with $v < b < w$ such that $\{v,w, b\}\in F_2(T)$.
        Therefore, $\{v,b\}$ and $\{b, w\}$ are edges of $\Gamma(T)$.
	
	Now, for the \xprop, assume towards a contradiction that~$\Gamma(T)$ contains the edges $\{u, w\}$ and $\{v, x\}$ with $u < v < w < x$, but $\{u, x\} \notin E(\Gamma(T))$.
	Let $(u, v, w, x)$ be lexicographically minimal with this property.
	Note that~$\{u,w\}$ and $\{v,x\}$ are both internal edges. Thus, \cref{thm:super_bar_property} applied to~$\{v,x\}$ implies that there exists a vertex~$a < v$ such that $\{a,v,x\}\in F_2(T)$ (and thus $\{\{a,v\},\{a,x\}\}\subseteq E(\Gamma(T))$).
	By minimality of $v$, it follows $a \leq u$.
	If $a < u$, then $\{a, v, x\}$ and $\{u, w\}$ are subsets of two different 3-simplices of $T$, contradicting the intersection-property of \cref{thm:triangulation_conditions} (since $(\{a,v,x\},\{u,w\})$ is a circuit).
	Thus, it follows $a = u$, that is,~$\{u,x\}\in E(\Gamma(T))$, which is a contradiction.
\end{proof}

In order to show that~$\Gamma$ is a bijection, we next define a map that maps a persistent graph to a triangulation. We then prove that this map is the inverse of~$\Gamma$.
To start with, we define the following auxiliary graph.

\begin{definition}\label{def:supergraph}
	For a \pg{} $G = ([n], E)$,
	we define the supergraph $\hat{G}:=(\{0, \ldots, n+1\}, E \cup (\{0,n+1\}\times\{0,\ldots,n+1\}))$, that is, $\hat{G}$ contains two additional vertices that are connected to all other vertices.
\end{definition}

It is easy to see that $\hat{G}$ is a \pg{} since adding a vertex that is adjacent to all others cannot violate the X- or \barprop{}.
Using \Cref{def:supergraph}, we now introduce the inverse map~$\Xi$.

\begin{definition}
	Let $G=([n], E)$ be a \pg{}. For $e = \{v, w\} \in E$, $v < w$, we define
	\begin{align*}
		\ell_G(e) &:= \max\set{i \in V(\hat{G}) ; i < v \text{ and } \{i, w\} \in E(\hat{G})} , \text{ and}\\
		r_G(e) &:= \min\set{i \in V(\hat{G}) ; w < i \text{ and } \{i, v\} \in E(\hat{G})} .
	\end{align*}
	Further, we define the 3-simplex $\xi_G(e):=\{\ell_G(e), v, w, r_G(e)\}$
	and the map~$\Xi\colon \persis{n} \rightarrow \triangs{n+2}$ as \[\Xi(G) := \set{\xi_G(e); e \in E}.\]
	We omit the index $G$ whenever it is clear from the context.
\end{definition}

Note that, by construction of~$\hat{G}$, the vertices~$\ell(e)$ and~$r(e)$ always exist.
Moreover, by \cref{thm:consecutive_neighbors}, the vertices in $\xi(e)$ form a clique in $\hat{G}$.
We now show that~$\Xi$ is well-defined, that is,~$\Xi(G)$ is indeed a triangulation.
To this end, we show that~$\Xi(G)$ satisfies the union-property and the intersection-property
according to \Cref{thm:triangulation_conditions}.
We start with the intersection-property.

\begin{lemma}\label{thm:triangle_disjointness}
	Let $G$ be a \pg{} and let $a < b < c$ be vertices of a 3-simplex $S \in \Xi(G)$.
	Then, $G$ does not contain any edge $\{x, y\}$ with $a < x < b < y < c$.
\end{lemma}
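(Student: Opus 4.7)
The argument proceeds by case analysis on which three of the four vertices of $S = \xi_G(\{v, w\}) = \{\ell, v, w, r\}$ are $a < b < c$, where $\ell := \ell_G(\{v,w\})$ and $r := r_G(\{v,w\})$, so that $\ell < v < w < r$. Note that since $\{x,y\}\in E$, we have $x,y\in[n]$, hence $x > 0$ and $y < n+1$, so the strict inequalities $x > \ell$ and $y < r$ are meaningful even when $\ell=0$ or $r=n+1$.

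For the triple $(\ell, v, w)$, an edge $\{x, y\}$ with $\ell < x < v < y < w$ crosses $\{v, w\}$, so the \xprop{} yields $\{x, w\} \in E(\hat{G})$, contradicting the maximality of $\ell$. The triple $(v, w, r)$ is symmetric: the \xprop{} forces $\{v, y\} \in E(\hat{G})$ with $w < y < r$, contradicting the minimality of $r$.

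For the triple $(\ell, v, r)$, an edge $\{x, y\}$ with $\ell < x < v < y < r$ either has $y \leq w$ (reducing to the previous case when $y < w$, or giving $\{x, w\} \in E$ directly when $y = w$, either way contradicting max of $\ell$) or $y > w$. The triple $(\ell, w, r)$ is analogous, reducing either to the $(v, w, r)$ situation or to the same remaining configuration $\ell < x < v < w < y < r$.

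The main obstacle is therefore ruling out an edge $\{x, y\} \in E$ with $\ell < x < v < w < y < r$; here $\{x, y\}$ is nested around $\{v, w\}$, so the \xprop{} does not apply directly. I would proceed by induction on $y - x$ (which is at least $3$ in this configuration). The \barprop{} applied to $\{x, y\}$ yields $z$ with $x < z < y$ and $\{x, z\}, \{z, y\} \in E$. I case-split on the location of $z$: if $z = w$, then $\{x, w\}$ contradicts max of $\ell$; if $z = v$, then $\{v, y\}$ contradicts min of $r$; if $v < z < w$, the \xprop{} applied to $\{x, z\}$ and $\{v, w\}$ (crossing as $x < v < z < w$) forces $\{x, w\} \in E$, again contradicting max of $\ell$; if $x < z < v$, then $\{z, y\}$ has the same forbidden form $\ell < z < v < w < y < r$ but with strictly smaller span $y - z < y - x$, so the induction hypothesis closes this case; and symmetrically, if $w < z < y$, then $\{x, z\}$ is a strictly smaller counterexample. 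In every case a contradiction is reached, completing the induction.
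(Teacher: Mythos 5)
Your proof is correct, and it organizes the contradiction differently from the paper. The paper takes a single minimal counterexample with respect to the combined quantity $(c-a)+(y-x)$, identifies the generating edge of $S$ relative to the triple $\{a,b,c\}$ as $e=\{b,b'\}$, and then locates the \barprop{} witness $z$ of $\{x,y\}$ inside $[b,b']$ before applying the \xprop{}; this requires some bookkeeping about where $b'$ lies relative to $y$, and it relies on the fact that $\xi(e)$ is a clique in $\hat{G}$ (via \cref{thm:consecutive_neighbors}) to produce the edges $\{a,y\}$ and $\{x,c\}$. You instead fix the generating edge $e=\{v,w\}$ of $S=\{\ell,v,w,r\}$ once, dispose of the crossing configurations directly by playing the \xprop{} against the extremality of $\ell$ and $r$, and reduce everything to the single nested configuration $\ell<x<v<w<y<r$, which you eliminate by induction on $y-x$ using the same \barprop{}/\xprop{} combination; your five sub-cases for the witness $z$ are exhaustive and each one checks out. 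The two arguments use the same core tools, but your decomposition makes explicit that the only nontrivial obstruction is the nested case and needs nothing beyond the defining extremality of $\ell_G$ and $r_G$. (Your opening remark that $x>0$ and $y<n+1$ is not actually used anywhere.)
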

\begin{proof}
  Assume towards a contradiction that there exists such an edge~$\{x,y\}$.
  We assume the vertices to be chosen such that $(c - a) + (y - x)$ is minimal.
	Note that $\hat{G}$ contains the edge $\{x, y\}$ and a clique on~$\{a, b, c\}$.
	By the \xprop{}, $\hat{G}$ then also contains the edges~$\{a, y\}$ and $\{x, c\}$.

        Let $e \in E(G)$ be an edge with $\xi(e) = S$.
	Then, by definition of~$\xi$, it follows that $e \nsubseteq \{a, b, c\}$.
        To see this, note that~$e=\{a,c\}$ is not possible since~$b\not\in\{\ell(e),r(e)\}$.
        Also $e = \{a, b\}$ is not possible, since $c\neq r(e)$ (since $y < c$ is also neighbor of $a$). Analogously, $e=\{b,c\}$ is not possible, since~$a\neq \ell(e)$.
	
	Therefore, assume \wilog{} that $e = \{b, b'\}$ with $b < b' < c$ and $a = \ell(e)$ and $c = r(e)$.
	We cannot have $b' > y$, because then we could replace $c$ by $b'$ and decrease $(c-a) + (y-x)$.
	Also $b' = y$ is impossible since then $a = \ell(e)$ would contradict the fact that $x$ is a neighbor of $y$. Hence, $b' < y$.
	From the \barprop{} it follows that $x$ and $y$ have a common neighbor $z$ between them.
	The minimality of $(c-a) + (y-x)$ implies $b \leq z \leq b'$ since otherwise $x$ or $y$ could be replaced by $z$ (see \Cref{fig:intersect_prop}).
	Then, by the \xprop{}, there exists the edge $\{b, y\}$ (contradicting $r(e) = c$) or the edge $\{x, b'\}$ (contradicting $\ell(e) = a)$.
\end{proof}

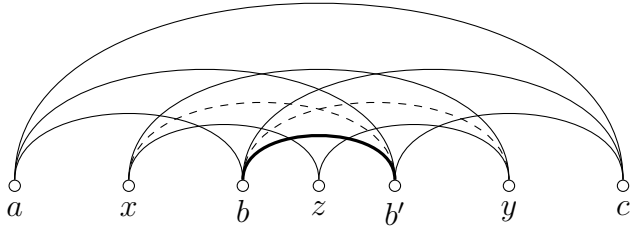
\begin{figure}
  \centering
  \begin{tikzpicture}
    \tikzset{every node/.style={vertex}}
    \node[label=below:$a$] (a) at (0,0) {};
    \node[label=below:$x$] (x) at (1.5,0) {};
    \node[label=below:$b$] (b) at (3,0) {};
    \node[label=below:$z$] (z) at (4,0) {};
    \node[label=below:$b'$] (b') at (5,0) {};
    \node[label=below:$y$] (y) at (6.5,0) {};
    \node[label=below:$c$] (c) at (8,0) {};
    \draw[curveedge] (a) to (c);
    \draw[curveedge] (b) to (c);
    \draw[curveedge,very thick] (b) to (b');
    \draw[curveedge] (x) to (y);
    \draw[curveedge] (a) to (b');
    \draw[curveedge] (x) to (z);
    \draw[curveedge] (y) to (z);
    \draw[curveedge,dashed] (x) to (b');
    \draw[curveedge,dashed] (b) to (y);
    \draw[curveedge] (a) to (b);
    \draw[curveedge] (b') to (c);
  \end{tikzpicture}
  \caption{The situation in \Cref{thm:triangle_disjointness} where $\ell(\{b,b'\})=a$ and $r(\{b,b'\})=c$. The existence of an edge~$\{x,y\}$ leads to a contradiction since it implies the existence of at least one of the dashed edges.}
  \label{fig:intersect_prop}
\end{figure}

Next, we prove that~$\Xi(G)$ satisfies the union-property.

\begin{lemma}\label{thm:image_triangulation}
  Let $G\in\persis{n}$ and let $S \in \Xi(G)$ be a 3-simplex containing vertices $a < b < c$ such that $\{a, b, c\}\not\in F_2(C)$.
  Then, there exists another 3-simplex~$S'\in\Xi(G)$ with~$\{a,b,c\}\subset S'$.
\end{lemma}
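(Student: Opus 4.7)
Write $S = \xi(\{v,w\}) = \{\ell, v, w, r\}$ with $\ell = \ell_G(\{v,w\})$, $r = r_G(\{v,w\})$ and $\ell < v < w < r$, so $\{a,b,c\}$ is one of the four triangular facets of $S$ obtained by omitting one of these four vertices. The plan is to exhibit, for each omitted vertex, an edge $e' \ne \{v,w\}$ in $E(G)$ such that $\xi(e') \supseteq \{a,b,c\}$; since $\xi$ is injective on $E(G)$, this $\xi(e')$ is automatically different from $S$. Before diving in, I would observe that the order-reversing involution $i \mapsto n+1-i$ (extended to $\{0,\ldots,n+1\}$) sends $\persis{n}$ to itself and satisfies $\Xi(\tau(G)) = \tau(\Xi(G))$, with the induced involution on $\triangs{n+2}$ pairing the facet omitting $r$ with the one omitting $\ell$, and the facet omitting $w$ with the one omitting $v$. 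So it suffices to handle the two representative facets $\{\ell,v,w\}$ and $\{\ell,v,r\}$.

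For the facet $\{\ell,v,w\}$: if $w = v+1$, then $\{\ell,v,w\}\notin F_2(C)$ forces $\ell\ne 0$, so \Cref{thm:consecutive_neighbors} applied to $w$ gives $\{\ell,v\}\in E(G)$, and the bound $r(\{\ell,v\}) \le w = v+1$ together with $r(\{\ell,v\}) > v$ yields $r(\{\ell,v\}) = w$, so $\xi(\{\ell,v\}) \supseteq \{\ell,v,w\}$. Otherwise $w > v+1$, and the bar-property supplies a neighbor of $v$ in $(v,w)$; take $d$ to be the largest such. One then verifies $\xi(\{v,d\}) = \{\ell,v,d,w\}$: (i) $r(\{v,d\}) = w$ by the maximality of $d$; (ii) no $j\in(\ell,v)$ is adjacent to $d$, else the X-property on the crossing edges $\{j,d\}$ and $\{v,w\}$ forces $\{j,w\}\in E$, contradicting $\ell = \ell_G(\{v,w\})$; and (iii) $\{\ell,d\}\in E(\hat G)$, which is trivial if $\ell = 0$, and when $\ell\ne 0$ follows by setting $i := r_G(\{\ell,v\}) \le w$, applying \Cref{thm:consecutive_neighbors} to get $\{v,i\}\in E$ (hence $i \le d$ by maximality), and then applying the X-property to the crossing edges $\{\ell,i\}$ and $\{v,d\}$ (or reading off the identity directly when $i = d$).

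For the facet $\{\ell,v,r\}$: the X-property on $\{\ell,w\}$ and $\{v,r\}$ already gives $\{\ell,r\}\in E(\hat G)$. If $r\ne n+1$ and $\ell(\{v,r\}) = \ell$, then $\xi(\{v,r\}) = \{\ell,v,r,r_G(\{v,r\})\}$ contains the facet. Otherwise, let $d$ be the smallest neighbor of $v$ strictly above $\ell$ in $\hat G$; the hypothesis $\{\ell,v,r\}\notin F_2(C)$ combined with the "otherwise" condition forces $\ell < v-1$, so $v-1\in(\ell,v)$ is a neighbor of $v$ and such a $d$ exists in $(\ell,v)$. The plan is to show $\xi(\{d,v\}) = \{\ell,d,v,r\}$: $\ell(\{d,v\}) = \ell$ is immediate from the choice of $d$; for $r(\{d,v\}) = r$, one rules out any $d$-neighbor $j$ in $(v,r)\setminus\{w\}$ in $G$ (for $j\in(v,w]$ the X-property gives $\{d,w\}\in E$, contradicting $d > \ell$; for the smallest $j\in(w,r)$ adjacent to $d$, the vertices $v$ and $j$ become consecutive neighbors of $d$, so \Cref{thm:consecutive_neighbors} delivers $\{v,j\}\in E$, contradicting $r = r_G(\{v,w\})$) and verifies $\{d,r\}\in E(\hat G)$ (trivial when $r = n+1$; and when $r \le n$ with $\ell(\{v,r\}) > \ell$, by letting $j^* := \ell(\{v,r\})$, using \Cref{thm:consecutive_neighbors} on $r$ to obtain $\{j^*,v\}\in E$ and hence $d \le j^*$, and then applying the X-property to the crossing edges $\{d,v\}$ and $\{j^*,r\}$).

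The main obstacle is the last case, omitting $w$: the fourth vertex $d$ of the neighboring tetrahedron is constructed from a neighborhood of $v$ rather than directly from $r$ or $w$, and establishing the crucial edge $\{d,r\}\in E$ requires pivoting through the auxiliary vertex $j^*$, whose existence is precisely what signals the failure of the simpler option $\xi(\{v,r\})$.
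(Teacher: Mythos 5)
Your reduction via the order-reversing involution and your treatment of the facet $\{\ell,v,r\}$ (omitting $w$) are sound, but the facet $\{\ell,v,w\}$ with $w>v+1$ contains a genuine gap. The parenthetical ``hence $i\le d$ by maximality'' is unjustified: $i:=r_G(\{\ell,v\})$ is only known to satisfy $v<i\le w$, and when $i=w$ it is \emph{not} bounded by $d$, since $d$ was taken as the largest neighbor of $v$ in the \emph{open} interval $(v,w)$. In that situation your X-property application needs $\ell<v<i<d$, which fails, and indeed the conclusion $\{\ell,d\}\in E(\hat{G})$ — and hence the claimed identity $\xi(\{v,d\})=\{\ell,v,d,w\}$ — can be false. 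The paper's own running example (\Cref{fig:example_bijection}) is a counterexample: for $G$ on $[5]$ with extra edges $\{1,3\},\{3,5\},\{1,5\}$ and $S=\xi(\{3,5\})=\{1,3,5,6\}$, the facet $\{1,3,5\}$ has $\ell=1$, $v=3$, $w=5$, $d=4$, yet $\xi(\{3,4\})=\{0,3,4,5\}$ does not contain $\{1,3,5\}$; the correct second simplex is $\xi(\{1,3\})=\{0,1,3,5\}$.

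The missing branch is precisely the one you \emph{did} split off for the other facet: if $r_G(\{\ell,v\})=w$ (equivalently, $\ell\ge 1$ has no neighbor strictly between $v$ and $w$), then $\xi(\{\ell,v\})\supseteq\{\ell,v,w\}$ and you are done immediately; only otherwise does $i<w$ hold, after which your chain $\{v,i\}\in E$, $i\le d$, X-property on $\{\ell,i\}$ and $\{v,d\}$ goes through. Adding that one case closes the gap. For comparison, the paper's proof handles this degenerate branch explicitly (``if $\ell(\{b,c\})=a$, then $\xi(\{b,c\})$ also contains $\{a,b,c\}$ and we are done'') and otherwise works with the \emph{minimal} common neighbor $x$ of $b$ and $c$, invoking \Cref{thm:triangle_disjointness} to pin down $r(\{x,b\})$; your construction via the extremal neighbor of $v$ is a legitimate alternative, but it is correct only after the same split.
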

\begin{proof}
	Fix an edge $e \in E(G)$ with $S = \xi(e)$ (recall that the vertices $\xi(e)$ form a clique in~$\hat{G}$ by \Cref{thm:consecutive_neighbors}).
	By definition of~$\xi$, it holds $b \in e$.
	\Wilog{}, we can assume that either $e = \{a, b\}$ or $e = \{b, b'\}$ with $b < b' < c$ (the cases $e=\{b,c\}$ or $e = \{b, b'\}$ with $a < b' < b$ are symmetric).
	The following case distinction yields the existence of a vertex~$x$ with~$a < x < b$ such that~$x$ is a common neighbor of~$b$ and~$c$ in~$\hat{G}$.
	
	\textbf{Case 1:} $c = n+1$.
	Since $\{a, b, c\}$ is not a face of $C$, we have $a + 1 < b$.
	Clearly, the vertex $b-1$ is a neighbor of $b$ and~$c$ in~$\hat{G}$ (by construction).
	
	\textbf{Case 2:} $c < n+1$.
	If $\ell(\{b, c\}) = a$, then the 3-simplex $\xi(\{b, c\})$ also contains $\{a, b, c\}$ and we are done.
	Otherwise, $x=\ell(\{b,c\})$ is a common neighbor of $b$ and $c$ in~$\hat{G}$ by \Cref{thm:consecutive_neighbors}.

	In the following, we assume~$x$ to be chosen minimally.
	By \cref{thm:triangle_disjointness}, $x$ has no neighbor between $b$ and $c$. Thus, $r(\{x, b\}) = c$.
	Furthermore, $b$ has no neighbor between~$a$ and $x$ because, by the \xprop{}, this would also be a neighbor of $c$,
	contradicting the minimality of $x$.
	Therefore, $\ell(\{x, b\}) = a$ and thus, $\xi(\{x, b\})$ contains $\{a, b, c\}$ (note that $\{x,b\}\neq e$).
\end{proof}

\Cref{thm:triangle_disjointness,thm:image_triangulation} together with \Cref{thm:triangulation_conditions} now yield the following.

\begin{lemma}
  For every $G\in\persis{n}$, it holds $\Xi(G)\in\triangs{n+2}$.
\end{lemma}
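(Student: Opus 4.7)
The plan is to appeal to \Cref{thm:triangulation_conditions}, which characterizes triangulations of~$C$ via the union- and intersection-properties. The two preceding lemmas are set up precisely for this: \Cref{thm:image_triangulation} will deliver the union-property and \Cref{thm:triangle_disjointness} will deliver the intersection-property. Before invoking them, I would briefly note that each $\xi(e) = \{\ell(e), v, w, r(e)\} \in \Xi(G)$ is a genuine 3-simplex with vertices in $\{0, \dots, n+1\}$: by construction of~$\hat{G}$ (whose vertices $0$ and $n+1$ are adjacent to everything) and the definitions of~$\ell$ and~$r$, the four listed vertices satisfy $\ell(e) < v < w < r(e)$ and are therefore distinct.

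For the union-property, I would pick any $S \in \Xi(G)$ and any 3-element facet $F = \{a, b, c\} \subset S$. If $F \in F_2(C)$ there is nothing to prove. Otherwise, \Cref{thm:image_triangulation} directly produces another 3-simplex $S' \in \Xi(G)$ with $F \subset S'$; since $|S'| = 4$ and $|F| = 3$, $F$ is a facet of~$S'$, as required.

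The step I expect to require slightly more care is the intersection-property. I would assume for contradiction the existence of $S, S' \in \Xi(G)$ together with $x_1 < x_2 < x_3 < x_4 < x_5$ such that $\{x_1, x_3, x_5\} \subset S$ and $\{x_2, x_4\} \subset S'$. Writing $S' = \xi(e')$, I would use \Cref{thm:consecutive_neighbors} to argue that the four vertices of $\xi(e')$ form a clique in $\hat{G}$, and hence $\{x_2, x_4\}$ is an edge of $\hat{G}$. The key subtlety is to verify that this is actually an edge of~$G$, not merely of~$\hat{G}$: this follows because the chain $x_1 < x_2 < x_3 < x_4 < x_5$ with all $x_i \in \{0, \ldots, n+1\}$ forces $x_2, x_4 \in [n]$, and the only edges of~$\hat{G}$ absent from~$G$ are those incident to $0$ or $n+1$. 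Once $\{x_2, x_4\} \in E(G)$ is secured, the configuration matches exactly the hypothesis of \Cref{thm:triangle_disjointness}, applied to the 3-simplex~$S$ with triple $x_1 < x_3 < x_5$, yielding the desired contradiction and hence the intersection-property.
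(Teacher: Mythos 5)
Your proposal is correct and follows exactly the route the paper takes: the paper states this lemma with the one-line justification that \Cref{thm:triangle_disjointness} and \Cref{thm:image_triangulation} together with \Cref{thm:triangulation_conditions} yield it, and your write-up simply makes explicit the two verifications (including the correct observation that $x_1 < x_2$ and $x_4 < x_5$ force $\{x_2,x_4\}$ to be an edge of $G$ rather than merely of $\hat{G}$, so that \Cref{thm:triangle_disjointness} applies).
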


Finally, we prove that~$\Xi$ is the inverse of~$\Gamma$.

\begin{theorem}\label{thm:bijection}
	The maps $\Gamma$ and $\Xi$ are mutually inverse bijections.
\end{theorem}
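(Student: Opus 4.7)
The plan is to first prove the identity $\Gamma\circ\Xi = \id_{\persis{n}}$ and then to argue that $\Gamma$ is injective; together these force $\Gamma$ to be a bijection, and combined with $\Gamma\circ\Xi = \id$ they immediately yield $\Xi = \Gamma^{-1}$, which is the claim.

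For the identity $\Gamma\circ\Xi = \id_{\persis{n}}$, I will fix $G=([n],E)\in\persis{n}$ and verify both inclusions between $E$ and the edge set of $\Gamma(\Xi(G))$. One direction is immediate from the definition of $\xi$: every $e\in E$ is contained in $\xi(e)\in\Xi(G)$ and lies in $[n]^2$, so it appears in $F_1(\Xi(G))\cap[n]^2$. The reverse direction uses the clique property of $\xi(e)$ in $\hat G$ already recorded by the authors as a consequence of \cref{thm:consecutive_neighbors}: any edge $\{i,j\}$ of $\Gamma(\Xi(G))$ sits inside some $\xi(e)$, which is a clique in $\hat G$; since $i,j\in[n]$, the edge already belongs to $E$. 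This step yields surjectivity of $\Gamma$.

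For injectivity of $\Gamma$, suppose $\Gamma(T_1)=\Gamma(T_2)$. Then $F_1(T_1)\cap[n]^2 = F_1(T_2)\cap[n]^2$, while every edge of $T_i$ incident to $0$ or $n+1$ lies in $F_1(C)$ by Gale's evenness criterion and is therefore present in every triangulation of $C$. Hence $F_1(T_1)=F_1(T_2)$, and by the theorem of \textcite{Dey93} cited in the introduction---that a triangulation of a 3-dimensional cyclic polytope is determined by its 1-skeleton---we conclude $T_1=T_2$.

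The main obstacle is modest: the clique property driving surjectivity has already been extracted in the lead-up to the theorem, and injectivity reduces to an off-the-shelf structural result. The only subtlety worth highlighting is that $\Gamma$ discards exactly those edges forced by $C$ itself (the ones incident to the cone vertices $0$ and $n+1$), so no information about $T$ is lost when passing to $\Gamma(T)$; this is precisely what makes Dey's theorem applicable to close the argument.
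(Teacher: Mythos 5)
Your proof is correct, but the second half takes a genuinely different route from the paper. The first half ($\Gamma\circ\Xi=\id$) is identical to the paper's argument, using containment of $e$ in $\xi(e)$ for one inclusion and the clique property of $\xi(e)$ in $\hat{G}$ (via \cref{thm:consecutive_neighbors}) for the other. For the remaining half, the paper proves $\Xi\circ\Gamma=\id$ directly: for a simplex $S=\{a,b,c,d\}\in T$ it shows, via the intersection-property of \cref{thm:triangulation_conditions}, that $a=\ell_{\Gamma(T)}(\{b,c\})$ and $d=r_{\Gamma(T)}(\{b,c\})$, hence $T\subseteq\Xi(\Gamma(T))$, and then concludes equality because one triangulation cannot properly contain another. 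You instead establish injectivity of $\Gamma$ by observing that all edges of a triangulation incident to $0$ or $n+1$ lie in $F_1(C)$ and are common to all triangulations, so $\Gamma(T_1)=\Gamma(T_2)$ forces $F_1(T_1)=F_1(T_2)$, and then you invoke the result of \textcite{Dey93} (cited in the introduction) that a triangulation of a $3$-dimensional cyclic polytope is determined by its $1$-skeleton; together with surjectivity from $\Gamma\circ\Xi=\id$ this yields $\Xi=\Gamma^{-1}$. Your route is shorter but outsources the key step to an external theorem, whereas the paper's argument is self-contained given its lemmas and, as a by-product, exhibits the explicit inverse $\{a,b,c,d\}\mapsto\{b,c\}$ of $\xi_G$, which the authors use in the remark following the theorem. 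Both arguments are sound; just make sure you are comfortable that every edge of $C$ appears in every triangulation (true here since no vertex of $C$ lies in the relative interior of a face, the vertices being in convex position on the moment curve).
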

\begin{proof}
  First, we show that $\Gamma \circ \Xi = \id$.
  Let~$G=([n],E)$ be a persistent graph. Note that, by definition, for each edge~$e\in E$, the 3-simplex~$\xi(e)$ contains~$e$, that is,~$e\in F_1(\Xi(G))$. Thus, $E\subseteq E(\Gamma(\Xi(G)))$.
  Moreover, since the vertices in~$\xi(e)$ form a clique in~$\hat{G}$ (by \Cref{thm:consecutive_neighbors}), it follows that $(F_1(\xi(e)) \cap \binom{[n]}{2}) \subseteq E$. Thus, $E(\Gamma(\Xi(G))) \subseteq E$.
  Hence,~$\Gamma(\Xi(G)) = G$.
  
	To see that $\Xi \circ \Gamma = \id$, let $T$ be any triangulation of $C$ and let $S \in T$ be a 3-simplex.
	Let $a < b < c < d$ be the vertices of $S$.
	We claim that $a = \max\set{i; 0 \leq i < b, \{i,c\}\in F_1(T)}$.
	Assume towards a contradiction that there exists a vertex $x$ with $a < x < b$ and~$\{x,c\}\in F_1(T)$.
	Then, $\{a, b, d\}$ and $\{x, c\}$ are subsets of two different simplices of $T$, contradicting the intersection-property of \Cref{thm:triangulation_conditions}.
	By symmetry, we also obtain that $d = \min\set{i; c < i \leq n+1, \{i,b\}\in F_1(T)}$.
	Now, since~$\Gamma(T)$ contains the edge $\{b, c\}$, it follows that $\Xi(\Gamma(T))$ contains $S$. Thus, $T \subseteq \Xi(\Gamma(T))$.
	Since~$T$ and $\Xi(\Gamma(T))$ are triangulations of $C$ (by \cref{thm:image_triangulation}), this implies $T = \Xi(\Gamma(T))$.
\end{proof}

An interesting observation is that, for any $G \in \persis{n}$, the map $\xi_G \colon E(G) \rightarrow \Xi(G)$ is a bijection.
Its inverse is given by the map $\{a, b, c, d\} \mapsto \{b, c\}$, where $a < b < c < d$.
This implies that the number of edges in~$G$ equals the number of 3-simplices in~$\Xi(G)$.

To close this section, we compare our result for~$d=3$ with the characterization for even~$d$ by~\textcite{OT12}.
They showed that for every triangulation~$T$ of~$C(n+2,2k)$, the set of
$k$-dimensional faces of~$T$ that do not contain~$\{i,i+1\}$ for some~$i$ contains exactly~$\binom{n-k+1}{k}$ \emph{non-intertwining} tuples from~$\{0,\ldots,n+1\}^{k+1}$,
where~$(a_0,\ldots,a_k)$ \emph{intertwines} $(b_0,\ldots,b_k)$ if $a_0 < b_0 < a_1 < b_1 < \dots < a_k < b_k$.
Conversely, they also proved that every non-intertwining set of size~$\binom{n-k+1}{k}$ (which is maximal) defines a unique triangulation.
In the easiest case~$d = 2k = 2$, the triangulations of~$C(n+2,2)$ are simply all maximal outerplanar graphs (which are chordal).
Now, when moving to~$d=3$ dimensions, we lose planarity since edges can intertwine but any interwined pair has to satisfy the \xprop{}. Also, chordality is lost and replaced by the property that any cycle following the vertex order has to have a chord (which can serve as an equivalent replacement for the \barprop{} in the definition of persistent graph).

\section{Stasheff-Tamari Order on Persistent Graphs}

Classic objects studied in the context of triangulations of cyclic polytopes are the first and second \emph{Stasheff-Tamari orders}, which are certain partial orders on the set of triangulations.
In this section we show how these partial orders translate to partial orders on persistent graphs.
It is known that the first and second Stasheff-Tamari order are identical on~$\triangs{n+2}$~\cite{ER96}. Hence, we will only define and use the first Stasheff-Tamari order here.

Let $C = C(n+2, 3)$ and $W := \{v_1 < v_2 < \dots < v_5\}$ be a set of five vertices of $C$.
Note that $\conv(W)$ equals~$C(5,3)$ and has exactly two triangulations:
\begin{align*}
  T^* &:= \set{\{v_1, v_2, v_3, v_5\}, \{v_1, v_3, v_4, v_5\}} \text{ and}\\
  T_* &:= \set{\{v_1, v_2, v_4, v_5\}, \{v_1, v_2, v_3, v_4\}, \{v_2, v_3, v_4, v_5\}}.
\end{align*}

Now, let $T$ be a triangulation of $C$ with $T_* \subseteq T$.
Then, we obtain a new triangulation~$T'$ of $C$
via $T' := (T \setminus T_*) \cup T^*$.
In this case, we say that $T'$ is obtained from $T$ by a \emph{bistellar up-flip},
and conversely, $T$ is obtained from $T'$ by a \emph{bistellar down-flip}.
For any two triangulations $T, T'$ of $C$, we write $T \leq_1 T'$ if $T'$ is obtained from $T$ by a sequence of bistellar up-flips.
This defines a partial order called the first Stasheff-Tamari order~\cite{KV91}.
Note that $T \leq_1 T'$ implies that $\abs{T} \geq \abs{T'}$.

The following theorem shows that a bistellar up-flip corresponds to removing a certain edge from the corresponding persistent graph.

\begin{theorem}\label{thm:stasheff-tamari}
	Let $T, T' \in \triangs{n+2}$.
	Then, $T'$ is obtained from $T$ by a bistellar up-flip if and only if
	$E(\Gamma(T)) = E(\Gamma(T')) \cup \{e\}$ for some edge $e\in E(\Gamma(T))$.
	In particular, $T \leq_1 T'$ if and only if $\Gamma(T) \supseteq \Gamma(T')$.
\end{theorem}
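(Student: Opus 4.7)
If $T' = (T\setminus T_*)\cup T^*$ is obtained by an up-flip at $\{v_1<\cdots<v_5\}$, I would first verify by explicit enumeration that the union of the 2-subsets of the simplices in $T_*$ equals all ten pairs in $\binom{\{v_1,\ldots,v_5\}}{2}$, while the same union for $T^*$ equals the nine pairs $\binom{\{v_1,\ldots,v_5\}}{2}\setminus\{\{v_2,v_4\}\}$. Since $T\setminus T_* = T'\setminus T^*$, the edge sets $E(\Gamma(T))$ and $E(\Gamma(T'))$ can differ only in the pair $\{v_2,v_4\}$. If some $S\in T\setminus T_*$ contained $\{v_2,v_4\}$, then $S$ and $\{v_1,v_2,v_3,v_5\}\in T^*$ sitting together in $T'$ would realize both triples of the circuit $(\{v_1,v_3,v_5\},\{v_2,v_4\})$ of \Cref{thm:cyclic_circuits}, contradicting \Cref{thm:triangulation_conditions} in $T'$. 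Hence $E(\Gamma(T))=E(\Gamma(T'))\cup\{\{v_2,v_4\}\}$.

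\textbf{Reverse direction.} Set $G:=\Gamma(T)$, $G':=\Gamma(T')$, and $e=\{v,w\}$ with $v<w$; since both graphs contain the Hamilton path, $w>v+1$. By \Cref{thm:bijection} we have $T=\Xi(G)$ and $T'=\Xi(G')$, so every simplex is of the form $\xi_G(f)$ or $\xi_{G'}(f)$ for some edge $f$. The plan is to identify the three simplices in $T\setminus T'$ as $T_*$ and the two in $T'\setminus T$ as $T^*$ for a suitable choice of $\{v_1<\cdots<v_5\}$. First, any two common neighbors $b_1<b_2$ of $v$ and $w$ in $(v,w)$ would give interlocking edges $\{v,b_2\},\{b_1,w\}$ in $G'$, forcing $\{v,w\}\in E(G')$ by the X-property, a contradiction. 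An identical argument rules out any neighbor of $v$ in $(b,w)$ or of $w$ in $(v,b)$ (in $G$), so $b$ is simultaneously the unique common neighbor of $v,w$ in $(v,w)$, the largest neighbor of $v$ below $w$, and the smallest neighbor of $w$ above $v$. Letting $a:=\ell_G(e)$ and $c:=r_G(e)$, these observations immediately yield $\xi_G(e)=\{a,v,w,c\}$, $\ell_G(\{b,w\})=v$, and $r_G(\{v,b\})=w$.

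\textbf{The main obstacle} is establishing that the ``outer'' vertices line up correctly, namely $\ell_G(\{v,b\})=a$ and $r_G(\{b,w\})=c$. I plan to do this by applying the bar-property in $G'$ to the edges $\{a,w\}$ and $\{v,c\}$: their natural witness in $G$ was $v$ (resp.\ $w$), which is no longer available, and the only remaining admissible replacement is $b$, thereby forcing $\{a,b\},\{b,c\}\in E(G)$. Combined with \Cref{thm:consecutive_neighbors} applied at the vertices $w$ and $v$, and with the intersection-property in $T$ on circuits of the form $(\{a_b,v,w\},\{a,b\})$ ruling out $a_b<a$, this pins the outer vertices. Only the edges $e$, $\{v,b\}$, and $\{b,w\}$ in $G$ have $\xi_G\neq\xi_{G'}$ (all other $f\in E(G')$ are unaffected by the removal of $e$), giving $T\setminus T'=\{\{a,v,b,w\},\{v,b,w,c\},\{a,v,w,c\}\}=T_*$ and $T'\setminus T=\{\{a,v,b,c\},\{a,b,w,c\}\}=T^*$ with $(v_1,v_2,v_3,v_4,v_5)=(a,v,b,w,c)$, exhibiting $T'$ as a bistellar up-flip of $T$.

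\textbf{The ``in particular'' clause} follows by induction on $k:=|E(\Gamma(T))|-|E(\Gamma(T'))|$. The base case $k=0$ is \Cref{thm:bijection}. For the forward implication $T\leq_1 T'\Rightarrow\Gamma(T)\supseteq\Gamma(T')$, the first part of the theorem removes exactly one edge per up-flip, so iterating over the sequence of flips yields the inclusion. For the converse, given $\Gamma(T)\supsetneq\Gamma(T')$, one selects an edge $e\in E(\Gamma(T))\setminus E(\Gamma(T'))$ whose removal keeps $\Gamma(T)-e$ a persistent graph still containing $\Gamma(T')$ (such an $e$ exists by choosing a maximal element in a suitable ordering of the extra edges, a combinatorial fact about persistent subgraphs); applying the reverse direction to $T$ and $\Xi(\Gamma(T)-e)$ yields one up-flip, and induction concludes.
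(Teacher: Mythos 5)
Your forward direction is fine (and slightly more careful than the paper's one\nobreakdash-line treatment), and the skeleton of your reverse direction --- existence and uniqueness of the common neighbor $b$ of $v$ and $w$, the facts $\ell_G(\{b,w\})=v$ and $r_G(\{v,b\})=w$, and the plan of showing that only the three simplices $\xi(\{v,b\}),\xi(\{v,w\}),\xi(\{b,w\})$ change --- coincides with the paper's. The gap sits exactly at what you call the main obstacle. To force $\{a,b\}\in E$ you apply the bar-property to $\{a,w\}$ in $G'$ and assert that ``the only remaining admissible replacement is $b$.'' What your preceding observations actually exclude is a witness $u$ with $a<u<v$ (since $a=\ell_G(\{v,w\})$), $u=v$, or $v<u<b$ (since $w$ has no neighbor there); they do \emph{not} exclude a common neighbor $u$ of $a$ and $w$ with $b<u<w$, because such a $u$ is constrained as a neighbor of $w$ and of $a$, about which you know nothing. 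Concretely, for $n=6$ let $E$ consist of the Hamilton path together with $\{1,3\},\{1,4\},\{1,5\},\{1,6\},\{2,5\},\{3,5\}$; both $G$ and $G'=G-\{2,5\}$ are persistent, and with $v=2$, $w=5$, $b=3$, $a=1$ the vertex $4$ is a perfectly admissible bar-property witness for $\{1,5\}$ in $G'$. So the bar-property does not force $\{a,b\}\in E$ (here $\{1,3\}$ happens to be an edge anyway, but not by your argument). Your fallback, ``\Cref{thm:consecutive_neighbors} applied at the vertices $w$ and $v$,'' also does not produce $\{a,b\}$: consecutive neighbors of $w$ only recover $\{a,v\}$ and $\{v,b\}$, which you already have. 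The paper closes this hole by applying \Cref{thm:consecutive_neighbors} at the vertex $a$ itself: if $b\notin N(a)$, then $a$ has consecutive neighbors $p<b<p'$ with $v\le p<p'\le w$, the lemma gives $\{p,p'\}\in E(\hat{G'})$, and the X\nobreakdash-property together with $\ell_G(\{b,w\})=v$ and $r_G(\{v,b\})=w$ forces $\{p,p'\}=\{v,w\}$, contradicting $\{v,w\}\notin E(G')$. Some argument of this kind is indispensable; your proposal is missing it, and the same defect affects the parallel claim $\{b,c\}\in E$.

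A secondary issue: for the converse of the ``in particular'' clause you invoke, without proof, the existence of an edge $e\in E(\Gamma(T))\setminus E(\Gamma(T'))$ whose removal leaves a persistent graph. This is not obvious --- deleting an edge can destroy the bar-property of other edges --- so as written it is an unproved lemma; the paper is admittedly just as terse here, deriving the statement directly from the single-flip characterization.
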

\begin{proof}
	Let $T \leq_1 T'$ be related by a bistellar up-flip on the vertices $v_1 < \dots < v_5$,
	that is,
        \begin{align*}
          T &\supseteq \set{
	\{v_1, v_2, v_3, v_4\},
	\{v_1, v_2, v_4, v_5\},
          \{v_2, v_3, v_4, v_5\}} \text{ and}\\
          T' &\supseteq \set{
	\{v_1, v_2, v_3, v_5\},
          \{v_1, v_3, v_4, v_5\}}.
        \end{align*}
	Then, $E(\Gamma(T)) = E(\Gamma(T')) \cup \set{\{v_2, v_4\}}$.
	
	Conversely, let $G,G' \in \persis{n}$ with $E(G) = E(G') \cup \{v, w\}$
	where $v < w$. Then clearly $v+1 < w$.
	Thus, by the \barprop{}, there exists $v < y < w$ with $\set{\{v, y\}, \{y, w\}} \subseteq E(G')$.
	Moreover, $y$ is unique because otherwise the \xprop{} would imply that $\{v, w\} \in E(G')$.
	In fact, the \xprop{} even implies that
	$\ell_G(\{y, w\}) = v$ and $r_G(\{v, y\}) = w$.
	Let further $x := \ell_G(\{v, w\})$ and $z := r_G(\{v, w\})$, thus $\xi_G(\{v, w\}) = \{x, v, w, z\}$.

	We claim that $r_G(e) = r_{G'}(e)$ holds for every edge $e \in E(G') \setminus \{\{v, y\}\}$.
	This is obvious except when $e = \{v, u\}$ for some vertex $u$ with $v < u < w$.
	So assume that is the case.
	Then we must have $u < y$ since $y < u < w$ would force~$G'$ to contain the edge $\{v, w\}$ by the \xprop{}.
	Since $r_G(\{v, u\}) \leq y$, it follows $r_{G'}(\{v, u\}) = r_G(\{v, u\})$, proving the claim.
	Symmetrically one can show $\ell_G(e) = \ell_{G'}(e)$ for every edge $e \in E(G') \setminus \{\{y, w\}\}$.
	
	So we can conclude $\xi_G(e) = \xi_{G'}(e)$ for each edge $e\in E(G')$ with $e\nsubseteq \{v,y,w\}$.
	Thus,
	\[
		\Xi(G') = \left(\Xi(G) \setminus \left\{\xi_G(\{v,y\}), \xi_G(v, w), \xi_G(\{y,w\}) \right\} \right)
		\cup \left\{ \xi_{G'}(\{v, y\}), \xi_{G'}(\{y, w\}) \right\}
	\]
	and it remains to determine $\xi_G(e)$ and $\xi_{G'}(e)$ for $e \subset \{v, y, w\}$.
	
	We claim that $\xi_{G'}(\{v, y\}) = \{x, v, y, z\}$, that is, $\ell_{G'}(\{v, y\})=x$ and $r_{G'}(\{v, y\})=z$.
	First, note that, since $r_G(\{v, y\}) = w$, we must have $r_{G'}(\{v, y\}) > w$ and thus $r_{G'}(\{v, y\}) = r_G(\{v, w\}) = z$.
	Now, if $\ell_{G'}(\{v, y\})> x$, then, by the \xprop{}, $\ell_{G'}(\{v, y\})$ would also be a neighbor of $w$ in $G$, contradicting $x = \ell_G\{v, w\}$.
	To see that $\hat{G'}$ contains the edge $\{x, y\}$, note that otherwise $x$ would have two consecutive neighbors $a, a'$ with $v \leq a < y < a' \leq w$.
	By \cref{thm:consecutive_neighbors}, this implies that~$G'$ contains the edge $\{a, a'\}$ (thus, $\{a, a'\} \neq \{v, w\}$).
	The \xprop{} then implies that~$G'$ also contains the edges $\{v, a'\}$ and $\{a, w\}$.
	If $a \neq v$, then this contradicts $\ell_G(\{y, w\}) = v < a$,
	and if $a' \neq w$, then this contradicts $r_G(\{v, y\}) = w > a'$.
        Thus, we have~$\{x,y\}\in E(\hat{G'})$ implying $\ell_{G'}(\{v, y\})=x$ (and thus also~$\ell_G(\{v, y\})=x$).
	This proves the claim $\xi_{G'}(\{v,y\}) = \{x, v, y, z\}$.
        Moreover, we clearly have $\xi_G(\{v, y\}) = \{x, v, y, w\}$.
	From symmetric arguments it follows that 
	$\xi_{G'}(\{y, w\}) = \{x, y, w, z\}$
	and $\xi_G(\{y, w\}) = \{v, y, w, z\}$.
	        
	Plugging these values into the equality above, we obtain
	\[
		\Xi(G') = (\Xi(G)\setminus \{\{x,v,y,w\},\{x,v,w,z\},\{v,y,w,z\}\}) \cup \{\{x,v,y,z\},\{x,y,w,z\}\},
	\]
	that is, $\Xi(G')$ is obtained from $\Xi(G)$ by a bistellar up-flip on~$x < v < y < w < z$.
\end{proof}

We close with mentioning a connection to higher Bruhat orders.
\textcite[Theorem~3]{Saeedi2015} described a map $\alpha: \persis{n} \to B(n, 2)$, where $B(n,2)$ is the second higher Bruhat order (which is isomorphic to the set of 3-signotopes~\cite{FelsnerSignotopesBruhat}).
Moreover, \textcite{Rambau97} showed an order-preserving map~$f_d \colon B(n,d) \to \text{HST}_1(n+2,d+1)$ from the higher Bruhat order to the first higher Stasheff-Tamari order (see also \cite[Theorem~8.9]{RV12}). It is open whether this map is surjective.
It can be shown that our bijection $\Xi$ equals~$f_2\circ\alpha$, which implies that~$f_2$ is surjective.

\section{Enumerating Triangulations}

The bijection between $\triangs{n+2}$ and $\persis{n}$ has the practical implication that in order to enumerate all triangulations of~$C(n+2, 3)$, one can instead enumerate all persistent graphs on~$n$ vertices.
Since these graphs are combinatorially simpler structures, we can thus improve upon previous enumeration efforts~\cite{JK18}.
We present a simple and efficient algorithm for the enumeration of \pg{}s.

For given $n$, let $\Ee := \binom{[n]}{2} \setminus \set{\{i, i+1\}; i\in[n-1]}$ 
be the set of all potential edges that are not on the obligatory Hamilton path of a \pg{}.
Further, we define $\preceq$~as the colexicographic order on $\Ee$ by setting,
for any $x_1 < y_1$ and $x_2 < y_2$,
\[
\{x_1, y_1\} \preceq \{x_2, y_2\} \iff (y_1 < y_2) \lor (y_1 = y_2 \land x_1 \leq x_2).
\]

Starting from a path~$P_n$, \cref{alg:count_persistent} processes the potential edges $\Ee$ in ascending order
and recurses on each edge, either adding or not adding it to the graph.
Its efficiency arises mainly from the fact that we can quickly identify and skip edges whose addition would violate the X- or \barprop{}.
We remark that, while the listing of \cref{alg:count_persistent} assumes that all inputs are copied upon invocation, it is easy to modify the algorithm such that no copying of $G$ is necessary.

\newcommand{\myfuncname}{PersistentGraphs}
\begin{algorithm}[t]
	\caption{Enumerating persistent graphs on $n$ vertices.}\label{alg:count_persistent}
	\begin{algorithmic}[1]
	\Input{A graph $G=([n],E)$, $k \le n$, and $x < k$,
	such that $E \cap \{e'\in\Ee \mid e' \succ \{x,k\}\} = \emptyset$ and $G$ is persistent except that the edge $\{x, k\}$ (if existing) may violate the \barprop{}.}
	\Output{All persistent supergraphs of $G$ obtainable by adding edges $e' \succeq \{x,k\}$.}
	\Function{\myfuncname{}}{$G$, $k$, $x$}
		\If{$x+1 = k$}
			\If{$k = n$}
				\State output $G$\label{line:k=n}
			\Else
				\State \Call{\myfuncname{}}{$G$, $k+1$, $1$}\label{line:k<n}
			\EndIf
			\State \Return
		\EndIf
		\If{$\{x, k\} \notin E$}\label{line:case1}
			\Comment{edge $\{x, k\}$ does not exist}
			\State $y \gets $ largest neighbor of $x$
			\State \Call{\myfuncname{}}{$G$, $k$, $y$}\label{line:case1recursion}
			\State add $\{x, k\}$ to $E$\label{line:add_e}
		\EndIf
		\For{$y = x+1, \dots, k-1$}\label{line:case2loop}
			\Comment{edge $\{x, k\}$ does exist}
			\If{$\{x, y\} \in E$}
				\State add $\{y, k\}$ to $E$
				\State \Call{\myfuncname{}}{$G$, $k$, $y$}
				\State remove $\{y, k\}$ from $E$ (unless $y = k-1$)
				\State $y \gets$ largest neighbor of $y$
			\EndIf
		\EndFor
	\EndFunction
	\end{algorithmic}
\end{algorithm}

\pagebreak[2]

The following proposition states the correctness.
\begin{proposition}
  Let $G=([n],E)$ be a graph containing a path on~$1,2,\ldots,n$ and let~$e=\{x,k\}$, $1 \le x < k \le n$, be such that the following properties hold:
  \begin{itemize}
    \item $E \cap \{e'\in\Ee \mid e' \succ e\} = \emptyset$.
    \item If~$e\not\in E$, then $G$ is persistent.
    \item If $e\in E$, then either~$G$ is persistent or~$G$ satisfies the \xprop{} and~$e$ is the only edge violating the \barprop{}.
  \end{itemize}
  Then \Call{\myfuncname{}}{$G$, $k$, $x$} outputs exactly all graphs in the set
  \[\mathcal{P}_G^e := \{G'=([n], E')\in \persis{n} \mid (E\subseteq E') \wedge ((E' \setminus E) \subseteq \set{e' \in \Ee; e' \succeq e})\}.\]
\end{proposition}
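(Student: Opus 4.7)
The plan is to proceed by strong induction on the number of remaining potential edges $r := |\{e'\in\Ee : e'\succeq e\}|$, which strictly decreases with each recursive invocation. In the base case $x+1=k$, the edge $e$ is a Hamilton-path edge not in $\Ee$, so $\mathcal{P}_G^e = \{G\}$ when $k=n$ and $\mathcal{P}_G^e = \mathcal{P}_G^{\{1,k+1\}}$ otherwise; the algorithm's actions on \cref{line:k=n,line:k<n} match these directly, with the inductive hypothesis covering the recursive call.

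For the inductive step $x+1<k$, I split $\mathcal{P}_G^e = \mathcal{P}_0 \sqcup \mathcal{P}_1$ disjointly according to whether $\{x,k\}\in E(G')$, and show the algorithm produces these halves separately. Assuming $\{x,k\}\notin E$ (so $G$ is persistent), the call on \cref{line:case1recursion} targets~$\mathcal{P}_0$. The crucial observation is that every $G'\in\mathcal{P}_0$ actually satisfies $E(G')\setminus E \subseteq \{e'\in\Ee : e'\succeq\{y,k\}\}$, where $y$ is the largest neighbor of $x$ in~$G$: otherwise some $\{y',k\}\in G'$ with $x<y'<y$ would combine with $\{x,y\}\in G\subseteq G'$ through the \xprop{} to force $\{x,k\}\in G'$, contradicting $G'\in\mathcal{P}_0$. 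Hence $\mathcal{P}_0 = \mathcal{P}_G^{\{y,k\}}$, and the inductive hypothesis applies. The algorithm then adds $\{x,k\}$ to $E$; a short invariant argument (any previous ``no-add'' branch for some $\{a,k\}$ with $a<x$ would have skipped past $\{x,k\}$ via the jump to $a$'s largest neighbor) shows that the \xprop{} is preserved and that $\{x,k\}$ becomes the only edge that may violate the \barprop{}.

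The for loop on \cref{line:case2loop} then enumerates $\mathcal{P}_1$. For each $G'\in\mathcal{P}_1$, the \barprop{} for $\{x,k\}$ picks out a unique smallest common neighbor $y\in(x,k)$ of $x$ and $k$; since $\{x,y\}\prec\{x,k\}$, automatically $\{x,y\}\in E$, so $y$ triggers the if-test in the loop body. Inserting $\{y,k\}$ resolves the \barprop{} obligation at $\{x,k\}$ and shifts it to $\{y,k\}$, so the recursive call $\Call{\myfuncname{}}{G,k,y}$ has its precondition met and, by the inductive hypothesis, outputs exactly the graphs in $\mathcal{P}_1$ whose smallest common neighbor is~$y$. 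Distinct iterations correspond to distinct smallest common neighbors, so their outputs are disjoint and together exhaust~$\mathcal{P}_1$. The ``$y\gets$ largest neighbor of $y$'' update is justified by the same \xprop{} reasoning as Case~1: any $G'\in\mathcal{P}_1$ whose smallest common neighbor strictly exceeds $y$ must in fact have it be at least $y$'s largest neighbor $y_*$, because $\{y,y_*\}\in G\subseteq G'$ combined with a hypothetical $\{y',k\}\in G'$ for $y<y'<y_*$ would force $\{y,k\}\in G'$ via the \xprop{}, contradicting the minimality hypothesis.

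The principal hurdle will be this careful \xprop{}-based bookkeeping that simultaneously justifies the skip in Case~1, preserves the \xprop{} across the insertion of $\{x,k\}$, and underpins the for-loop's jump; once these invariants are established, the disjoint-union decomposition $\mathcal{P}_G^e = \mathcal{P}_0 \sqcup \mathcal{P}_1$ closes the induction.
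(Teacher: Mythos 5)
Your proposal follows the paper's own argument essentially step for step: the same (implicit in the paper, explicit in your write-up) induction on the remaining potential edges, the same base case, the same partition of $\mathcal{P}_G^e$ according to whether $\{x,k\}$ is present, the same \xprop{}-based identification of the ``do not add'' branch with $\mathcal{P}_G^{\{y,k\}}$ for $y$ the largest neighbor of $x$, and the same analysis of the for loop via the minimal common neighbor of $x$ and $k$, including the justification of the jump to the largest neighbor of $y$.

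The one place where you go beyond the paper is also the one place where your argument does not close. You rightly observe that before the for loop can be analyzed one must check that inserting $\{x,k\}$ preserves the \xprop{} and leaves $\{x,k\}$ as the only possible \barprop{} violator; but you justify this by appealing to what ``previous no-add branches'' of the algorithm would have done. That information is not among the proposition's hypotheses: the statement quantifies over \emph{all} inputs satisfying the three bullets, with no reference to how $G$ arose, so inside the induction only the bullets are available --- and they do not imply the claim. Concretely, for $n=4$ let $G$ be the path $1,2,3,4$ plus the edge $\{1,3\}$ and $e=\{2,4\}$: all three bullets hold and $G$ is persistent, yet $G+e$ violates the \xprop{} (the crossing pair $\{1,3\},\{2,4\}$ lacks $\{1,4\}$), and the for loop then emits the non-persistent graph $G+e$ even though $\mathcal{P}_G^e=\{G\}$. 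To make the induction go through one must strengthen the invariant, e.g.\ by additionally requiring that every edge $\{a,c\}\in E$ with $a<x<c<k$ satisfies $\{a,k\}\in E$, and verify that this is restored before each recursive call. In fairness, the paper's own proof is equally silent here (it asserts $B=\mathcal{P}_{G+e}^e$ and defers to the $e\in E$ case without checking that $G+e$ meets that case's precondition), so your proposal is no weaker than the published argument; but the ``short invariant argument'' as you describe it is circular relative to the stated hypotheses, and this is the step that genuinely needs repair.
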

\begin{proof}
  If $x+1=k$, then clearly~$e\in E$ and $G$ is persistent.
  If now $k=n$, then clearly $\mathcal{P}_G^e=\{G\}$, that is, \cref{line:k=n} is correct.
  If $k < n$, then~$\mathcal{P}_G^e = \mathcal{P}_G^{\{1,k+1\}}$. Thus, \cref{line:k<n} is correct.

  Now assume that $x + 1 < k$.
  If $e \notin E$ (\cref{line:case1}), then~$G$ is persistent.
  The set~$\mathcal{P}_G^e$ can be partitioned into two sets:
  \begin{align*}
    A&:=\{G'=([n],E')\in \mathcal{P}_G^e\mid e \not\in E'\} \text{ and}\\
    B&:=\{G'=([n],E')\in \mathcal{P}_G^e\mid e \in E'\}.
  \end{align*}
  Consider a graph~$G'=([n],E')\in A$. Let~$y$ be the largest neighbor of~$x$ in~$G$ and note that~$y < k$ since~$E$ does not contain any edge~$e'$ with~$e' \succ e$.
  Due to the \xprop{}, $E'$ does not contain any edge $\{x', k\}$ with $x < x' < y$.
	Thus, $A = \mathcal{P}_G^{\{y,k\}}$ and all these graphs are produced by the recursive call in \cref{line:case1recursion}.
        As regards the set~$B$, note that~$B = \mathcal{P}_{G+e}^e$, where~$G+e := ([n],E\cup\{e\})$.
        Thus, we add $e$ to $E$ in \cref{line:add_e} and then handle this case in \cref{line:case2loop}.
  
	If $e\in E$, then, for every~$G'=([n],E')\in \mathcal{P}_G^e$, there must be a minimal vertex $y$ with $x < y < k$ and $\set{\{x, y\}, \{y, k\}} \subseteq E'$ (by the \barprop{}).
	Since $\{x, y\} \prec e$, it follows that $\{x,y\}\in E$.
        Hence, $y$ has to be a neighbor of~$x$ in~$G$ with~$x < y < k$.
	Furthermore, any neighbor~$y'$ of~$x$ with $x < y' < y$ cannot have a neighbor larger than~$y$, because the \xprop{} would otherwise imply that also $\{y', k\} \in E'$, contradicting the minimality of~$y$.
        That is, $y$~can only be neighbor of~$x$ such that no other neighbor~$y'$ of~$x$ with~$x < y' < y$ has a neighbor larger than~$y$. Let~$Y$ denote the vertex set containing all these possible candidates.
        The for-loop in \cref{line:case2loop} iterates exactly over the candidates in~$Y$.
	For a given $y\in Y$, let~$A_y\subseteq\mathcal{P}_G^e$ be the subset of graphs~$G'=([n],E')$, where~$y$ is the minimal vertex with~$x<y<k$ and $\set{\{x, y\}, \{y, k\}} \subseteq E'$, and note that $A_y = \mathcal{P}_{G+\{y,k\}}^{\{y,k\}}$. Moreover,~$\{A_y\mid y\in Y\}$ is clearly a partition of~$\mathcal{P}_G^e$.
        Hence, calling \Call{\myfuncname{}}{$G$, $k$, $y$} for each possible $y$, outputs exactly the graphs in $\mathcal{P}_G^e$.
\end{proof}

\begin{corollary}\label{thm:alg_cor}
	Let the graph~$G=([n], E)$ be the path on vertices~$1,2,\ldots,n$.
	Then \Call{\myfuncname{}}{$G$, $2$, $1$} outputs exactly $\mathcal{P}_G^{\{1,2\}}=\persis{n}$.
\end{corollary}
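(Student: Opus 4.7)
The plan is to apply the preceding proposition to the specific input $(G, 2, 1)$, so the whole task reduces to verifying the proposition's hypotheses and then identifying the output set with $\persis{n}$.

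First I would check the three input conditions for $G=([n],E)$ being the Hamilton path, $k=2$, $x=1$, so $e = \{1,2\}$. The edges in $\Ee$ are precisely the non-consecutive pairs, and the colex order places $\{1,2\}$ strictly below every element of $\Ee$ (since every $\{x',y'\}\in\Ee$ has $y'\geq 3 > 2$). Therefore $\{e'\in\Ee\mid e'\succ\{1,2\}\} = \Ee$, and $E\cap\Ee=\emptyset$ because $E$ consists only of consecutive pairs. The edge $e=\{1,2\}$ does lie in $E$, so I must verify the third condition: that $G$ is persistent. The Hamilton path property is immediate. The \xprop{} holds vacuously since the path has no two edges $\{a,c\},\{b,d\}$ with $a<b<c<d$ (every edge has the form $\{i,i+1\}$, so there is no pair of ``crossing'' edges). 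The \barprop{} also holds vacuously since it applies only to edges $\{a,b\}$ with $a<b-1$, and $G$ contains no such edges.

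Next I would show that $\mathcal{P}_G^{\{1,2\}} = \persis{n}$. By definition,
\[
\mathcal{P}_G^{\{1,2\}} = \bigl\{G'=([n],E')\in\persis{n} \;\big|\; E\subseteq E' \text{ and } E'\setminus E \subseteq \{e'\in\Ee\mid e'\succeq\{1,2\}\}\bigr\}.
\]
Every persistent graph contains the Hamilton path $1,\ldots,n$ by definition, so the inclusion $E\subseteq E'$ is automatic. The set $E'\setminus E$ consists of non-consecutive pairs, so it lies in $\Ee$; and as noted above, every element of $\Ee$ satisfies $e'\succ\{1,2\}$. Hence both constraints are trivially met by every $G'\in\persis{n}$, so $\mathcal{P}_G^{\{1,2\}}=\persis{n}$.

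Combining the two observations, the proposition guarantees that \Call{\myfuncname{}}{$G$, $2$, $1$} outputs exactly $\mathcal{P}_G^{\{1,2\}}=\persis{n}$, which is the claim. There is no significant obstacle here; the corollary is a bookkeeping consequence of the proposition, and the only subtlety is confirming that the path itself satisfies the \xprop{} and \barprop{} vacuously so that it qualifies as a valid ``starting'' persistent graph.
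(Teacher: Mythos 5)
Your proof is correct and matches the paper's (implicit) reasoning: the corollary is stated without proof precisely because it is the direct instantiation of the preceding proposition at $(G,2,1)$, combined with the observation that $\mathcal{P}_G^{\{1,2\}}=\persis{n}$. Your careful verification that the path satisfies the proposition's hypotheses (vacuous \xprop{} and \barprop{}) and that every element of $\Ee$ is $\succeq\{1,2\}$ fills in exactly the bookkeeping the authors left to the reader.
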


By \cref{thm:alg_cor}, we can use \cref{alg:count_persistent} to efficiently count the number of elements of $\persis{n}$ and thus of $\triangs{n+2}$.
The results for~$n \le 16$ are listed in \cref{tab:counts}.
The computations\footnote{Implementation available at \url{https://www.akt.tu-berlin.de/menue/software}.} were performed using an Intel~Xeon~W-2125~CPU.

\begin{table}
\centering
\caption{Number of \pg{}s with~$n\le 16$ vertices. The values for $n\leq13$ were already known~\cite{JK18}.}\label{tab:counts}
\begin{tabular}{crr}
	$n$ & $\abs{\persis{n}} = \abs{\triangs{n+2}}$ & computation time\\
	\hline
	1 & 1 & $<$\,0.1\,s \\
	2 & 1 & $<$\,0.1\,s \\
	3 & 2 & $<$\,0.1\,s \\
	4 & 6 & $<$\,0.1\,s \\
	5 & 25 & $<$\,0.1\,s \\
	6 & 138 & $<$\,0.1\,s \\
	7 & 972 & $<$\,0.1\,s \\
	8 & 8\,477 & $<$\,0.1\,s \\
	9 & 89\,405 & $<$\,0.1\,s \\
	10 & 1\,119\,280 & $<$\,0.1\,s \\
	11 & 16\,384\,508 & 0.15\,s \\
	12 &    276\,961\,252    & 2\,s \\
        13 & 5\,349\,351\,298    & 30\,s \\
        \hline
	14 &  116\,985\,744\,912  &  12\,m \\
	15 & 2\,873\,993\,336\,097 & 4\,h~30\,m \\
	16 & 78\,768\,494\,976\,617    & 4\,d~23\,h~2\,m \\
\end{tabular}
\end{table}

\section{Conclusion}
Our results yield further insights into the structure of the triangulations of the 3\nobreakdash-dimen\-sional cyclic polytope by relating their 1-skeleton to persistent graphs.
It remains open to characterize the structure of the $\lfloor d/2\rfloor$-skeleton for arbitrary odd dimension~$d$.
It is also open whether a closed formula for the number of triangulations of~$C(n+2,3)$ can be given~\cite[Open Problem~9.2]{RV12}.

\paragraph{Acknowledgment.}
We thank Lito Goldmann for his work on enumerating persistent graphs, which led us to the discovery of the bijection.
We further thank Andr\'{e} Nichterlein for helpful initial discussions.
We also thank the anonymous reviewers of Combinatorica for their valuable comments.

{
\raggedright
\printbibliography
}

\end{document}